\documentclass[english,11pt]{article}
\usepackage{fullpage}
\usepackage[ruled,vlined]{algorithm2e}
\usepackage[final]{graphics,graphicx}
\usepackage{fixme}
\usepackage{amsmath}
\usepackage{amssymb}
\usepackage{amsfonts}
\usepackage{babel}
\usepackage{graphics}
\usepackage{graphicx}
\usepackage{subfigure}
\usepackage{scalefnt}
\usepackage{environ}

\usepackage{tikz}
\usetikzlibrary{arrows,automata,shapes,decorations,calc,matrix,positioning}
\newtheorem{theorem}{Theorem}

\newtheorem{lemma}[theorem]{Lemma}

\newenvironment{proof}{\noindent {\bf Proof }}{\qed}
\newtheorem{definition}[theorem]{Definition}

\newcommand{\qed}{\penalty 1000 \hfill\penalty 1000$\Box$\par\medskip}

\newcommand{\abs}[1]{\left| {#1} \right|}
\newcommand{\norm}[1]{\left\| {#1} \right\|}

\newcommand{\ee}{{\bf e}}

\DeclareMathOperator*{\argmax}{~\!argmax~\!}

\newcommand{\I}{\ensuremath{{\rm \mathcal I}}}
\newcommand{\R}{\ensuremath{{\rm \mathbb R}}}

\newcommand{\q}{\ensuremath{{\bf q}}}
\newcommand{\cc}{\ensuremath{{\bf c}}}
\newcommand{\x}{\ensuremath{{\bf x}}}
\newcommand{\y}{\ensuremath{{\bf y}}}
\newcommand{\z}{\ensuremath{{\bf z}}}
\newcommand{\vv}{\ensuremath{{\bf v}}}

\newcommand{\w}{\ensuremath{{\bf w}}}
\newcommand{\rr}{\ensuremath{{\bf r}}}
\newcommand{\T}{\mathsf{T}}

\newcommand{\0}{\ensuremath{{\bf 0}}}

\usepackage{ifpdf}
\ifpdf
\DeclareGraphicsRule{*}{mps}{*}{}
\else
\DeclareGraphicsRule{*}{eps}{*}{}
\fi

\makeatletter

\title{The complexity of interior point methods for solving discounted
  turn-based stochastic games}

\author{
Thomas Dueholm Hansen\thanks{
Department of Computer Science,
Aarhus University, Denmark. E-mail:
{\tt \{tdh, rij\}@cs.au.dk}.
Supported by the Sino-Danish Center for the Theory of Interactive Computation,
funded by the Danish National Research Foundation and the National
Science Foundation of China (under the grant 61061130540); and by the
Center for research in the Foundations of Electronic Markets (CFEM),
supported by the Danish Strategic Research Council.}
\thanks{Thomas Dueholm Hansen is a recipient of the Google Europe
  Fellowship in Game Theory, and this research is supported in part by this
Google Fellowship. Part of the work was performed at the
School of Computer Science, Tel Aviv University, Israel, with the
support of The Danish Council for Independent Research $|$ Natural Sciences (grant no. 12-126512).
}
\and Rasmus Ibsen-Jensen$^*$
}

\date{}

\begin{document}
\maketitle
\begin{abstract}
We study the problem of solving discounted, two player, turn based,
stochastic games (2TBSGs). Jurdzi{\'n}ski and Savani showed that
2TBSGs with deterministic transitions can be reduced to solving
$P$-matrix linear complementarity problems (LCPs). We show that the
same reduction works for general 2TBSGs.
This implies that a number of interior point methods for solving $P$-matrix LCPs can be used to
solve 2TBSGs. We consider two such algorithms. First, we consider
the unified interior point method of Kojima, Megiddo, Noma, and
Yoshise, which runs in time $O((1+\kappa)n^{3.5}L)$, where 
$\kappa$ is a parameter that depends on the $n \times n$ matrix $M$ defining the LCP, and $L$ is the number of bits in the representation of $M$. Second, we consider the interior point potential reduction algorithm
of Kojima, Megiddo, and Ye, which runs in time $O(\frac{-\delta}{\theta}n^4\log \epsilon^{-1})$, where $\delta$ and $\theta$ are parameters that depend on $M$, and $\epsilon$ describes the quality of the solution. For 2TBSGs with
$n$ states and discount factor $\gamma$ we prove that in the worst case
$\kappa = \Theta(n/(1-\gamma)^2)$, $-\delta =
\Theta(\sqrt{n}/(1-\gamma))$, and $1/\theta =
\Theta(n/(1-\gamma)^2)$.
The lower bounds for $\kappa$, $-\delta$, and $1/\theta$ are
obtained using the same family of deterministic games.
\end{abstract}

\section{Introduction}

{\bf  Two-player turn-based stochastic games (2TBSGs).}
A {\em two-player turn-based stochastic game} (2TBSG) consists of a finite set of states and for each state a finite set of actions. The game is played 
by two players (Player~1 and Player~2) for an
infinite number of rounds.  The states are partitioned into two sets $S^1$ 
and $S^2$, belonging to Player 1 and Player 2, respectively.
In each round the game is in some state, and
the player controlling the current state $i$ 
chooses an action $a$ available from state $i$. Every action is associated with a probability distribution over states, and the next state is picked at random according to the probability distribution for $a$. After every transition the game ends with probability $1-\gamma>0$, where
$\gamma$ is the \emph{discount factor} of the game. Every action has
an associated cost, and the objective of Player~1 is to \emph{minimize}
the expected sum of costs, whereas the objective of Player~2 is to \emph{maximize} the expected sum of costs, i.e., the game is a zero-sum game. Our main focus is the case where all states have two actions.

The class of (turn-based) stochastic games was introduced by Shapley~\cite{Sha53} in 1953 and has received much attention since then. 
For books on the subject see, e.g., Neyman and Sorin \cite{NeySor03}
and Filar and Vrieze \cite{FilVri97}. 
2TBSGs with a single player, i.e., one player controls all the states, are known as \emph{Markov decision processes}; a problem that is important in its own right (see, e.g., Puterman \cite{Puterman94}).
Shapley showed that every state in a 2TBSG has a \emph{value} that can be enforced by both players (a property known as \emph{determinacy}). {\em Solving} a 2TBSG means finding the values of all the states, and the problem of solving 2TBSGs is the topic of this paper. 

{\bf Classical algorithms for solving 2TBSGs.} 2TBSGs form an intriguing class of games whose
status in many ways resembles that of linear programming 40 years ago.
They can be solved efficiently with \emph{strategy iteration} algorithms, resembling the simplex method for linear programming, but at the same time no polynomial time
algorithm is known. Strategy iteration algorithms were first described for discounted 2TBSGs by Rao
{\em et al.}~\cite{RCN73}. Building on a result by Ye \cite{Ye11}, Hansen {\em et al.}~\cite{HansenMZ13} showed that the standard strategy iteration algorithm solves 2TBSGs with a
fixed discount, $\gamma$, in \emph{strongly} polynomial time. Prior to this
result a polynomial bound by Littman~\cite{Littman96} was known for
the case where $\gamma$ is fixed. Littman showed that Shapley's \emph{value
  iteration} algorithm \cite{Sha53} solves discounted 2TBSGs in time 
$O(\frac{n m L}{1-\gamma}\log\frac{1}{1-\gamma})$, where $n$
is the number of states, $m$ is the number of actions, and $L$ is the
number of bits needed to represent the game. For a more thorough
introduction to the background of the problem we refer to
Hansen {\em et al.}~\cite{HansenMZ13} and the references therein.

{\bf Interior point methods.} Several polynomial time algorithms have been discovered for solving linear programs. Perhaps the most notable family of such algorithms is {\em interior point methods}. The first interior point method was introduced by Karmarkar~\cite{Kar84} in 1984, and the technique has since been studied extensively
and applied in other contexts. See, e.g., Ye \cite{Ye98}. In particular,
interior point methods can be used to solve \emph{$P$-matrix linear
complementarity problems} (LCPs). One may hope that a polynomial time algorithm for solving 2TBSGs whose discount factor $\gamma$ is not fixed (i.e.,
the discount factor is part of the input) can be obtained through the
use of interior point methods. Indeed, this approach was suggested by
Jurdzi{\'n}ski and Savani \cite{JurdSav08} and Hansen {\em et al.}
\cite{HansenMZ13}. In this paper we study a reduction from 2TBSGs to 
$P$-matrix LCPs, and we study the complexity of two known interior point methods when applied to the resulting $P$-matrix LCPs. 

{\bf $P$-matrix linear complementarity problems.}
A linear complementarity problem (LCP) is defined as follows: Given an
$(n\times n)$-matrix $M$ and a vector $\q\in \R^n$, find two vectors
$\w ,\z\in \R^n$, such that $\w=\q+M\z$, $\w^\T\z=0$, and $\w,\z\geq
\0$. LCPs have also received much attention. For books on the subject
see, e.g., Cottle {\em et al.}~\cite{CoPaSt92} and Ye \cite{Ye98}. 

Jurdzi{\'n}ski and Savani~\cite{JurdSav08} showed that solving
a \emph{deterministic} 2TBSG $G$, i.e., every action leads to a single
state with probability 1, can be reduced to solving an LCP $(M,\q)$. 
G{\"a}rtner and R{\"u}st \cite{GartRust05} gave a similar reduction
from simple stochastic games; a class of games that is polynomially
equivalent to 2TBSGs (see \cite{AnMi09}).
Moreover, Jurdzi{\'n}ski and Savani \cite{JurdSav08}, and G{\"a}rtner
and R{\"u}st \cite{GartRust05}, showed that the resulting matrix $M$ is
a \emph{$P$-matrix} (i.e., all principal sub-matrices have a positive
determinant). We show that the reduction of Jurdzi{\'n}ski and Savani
also works for general 2TBSGs, and that the resulting matrix $M$ is
again a $P$-matrix.

Krishnamurthy {\em et al.}~\cite{KriParRav12} gave a survey on various stochastic games and LCP formulations of those.

{\bf The unified interior point method.}
There exist various interior point methods for solving $P$-matrix
LCPs. One algorithm we consider in this paper is the unified
interior point method of Kojima, Megiddo, Noma, and
Yoshise~\cite{Koj91}. The unified 
interior point method solves an LCP whose matrix $M \in \R^{n\times n}$ is
a \emph{$P_*(\kappa)$-matrix} in time $O((1+\kappa)n^{3.5}L)$, where $L$ is
the number of bits needed to describe $M$. A matrix $M$ is a
$P_*(\kappa)$-matrix, for $\kappa \ge 0$, if 
and only if for all vectors $\x \in \R^n$, we have that
$\x^\T (M\x) + 4\kappa\sum_{i\in \delta_+(M)} \x_i(M\x)_i \geq 0$,
where $\delta_+(M) = \{i \in [n] \mid \x_i(M\x)_i>0\}$.
If $M$ is a $P$-matrix then it
is also a $P_*(\kappa)$-matrix for some $\kappa \ge 0$. Hence,
the algorithm can be used to solve 2TBSGs.

Following the work of Kojima {\em et al.}~\cite{Koj91}, many
algorithms with complexity polynomial in $\kappa$, $L$, and $n$ have
been introduced. For additional examples see \cite{C08,CZZ09,IlNaTe10}.

{\bf An interior point potential reduction algorithm.} 
The second interior point method we consider is the
potential reduction algorithm of Kojima, Megiddo, and Ye
\cite{KoMeYe92}. (See also Ye~\cite{Ye98}.) The potential reduction
algorithm is an interior point method that takes as input a 
$P$-matrix LCP and a parameter $\epsilon > 0$, and produces an
approximate solution $\w,\z$, such that $\w^\T\z<\epsilon$, 
$\w=\q+M\z$, and $\w,\z\geq \0$. The running time of the
algorithm is $O(\frac{-\delta}{\theta} n^{4} \log \epsilon^{-1})$,
where $\delta$ is the least eigenvalue of $\frac{1}{2}(M+M^\T)$, and
$\theta$ is the positive $P$-matrix  number of $M$, that is,
$\theta = \min_{\norm{\x}_2 =1}\max_{i\in\{1,\dots,n\}} \x_i (M\x)_i$.
We refer to $\frac{-\delta}{\theta}$ as the condition number of $M$.
The analysis involving the condition number appears in Ye~\cite{Ye98}.

R{\"u}st~\cite{Rust07} showed that there exists
a simple stochastic game for which the $P$-matrix LCPs resulting from
the reduction of G{\"a}rtner and R{\"u}st \cite{GartRust05} has a large
condition number. The example of R{\"u}st contains a parameter that can
essentially be viewed as the discount factor $\gamma$ for 2TBSGs, and he
shows that the condition number can depend linearly on
$\frac{1}{1-\gamma}$. To be more precise, R{\"u}st~\cite{Rust07} showed
that the matrix $M$ resulting from the reduction of G{\"a}rtner and
R{\"u}st \cite{GartRust05} has positive $P$-matrix number smaller than
1, and that the smallest eigenvalue of the matrix $\frac{1}{2}(M+M^{\T})$
is $-\Omega\big(\frac{1}{1-\gamma}\big)$. This bound can be viewed as
a precursor for our results.

\subsection{Our contributions}
Our contributions are as follows. We show that the reduction by Jurdzi{\'n}ski and Savani~\cite{JurdSav08} from deterministic
2TBSGs to $P$-matrix LCPs generalizes to 2TBSGs without
modification. Although the reduction is the same, we provide an
alternative proof that the resulting matrix is a $P$-matrix.
Let $M_G$ be the
matrix obtained from the reduction for a given 2TBSG $G$.
We also show that for any 2TBSG $G$ with $n$ states and discount factor $\gamma$ we have:
\begin{itemize}
\item[$(i)$] 
The matrix $M_G$ is a $P_*(\kappa)$-matrix for
$\kappa=O\big(\frac{n}{(1-\gamma)^2}\big)$.
\item[$(ii)$] 
The matrix $\frac{M_{G}+M_{G}^{\T}}{2}$ has
smallest eigenvalue $\delta$, where
$-\delta = O\big(\frac{\sqrt{n}}{1-\gamma}\big)$. 
\item[$(iii)$] 
The reciprocal of the positive $P$-matrix number of $M_G$ is $\frac{1}{\theta(M_G)} = O\big(\frac{n}{(1-\gamma)^2}\big)$.
\end{itemize}
Item $(i)$ implies that the running time
of the unified interior point method of Kojima {\em et
  al.}~\cite{Koj91} for 2TBSGs is at most
$O((1+\kappa)n^{3.5}L) = O\big(\frac{n^{4.5}L}{(1-\gamma)^2}\big)$.
Items $(ii)$ and $(iii)$ together imply that the running time of 
the potential reduction algorithm of Kojima {\em et
  al.}~\cite{KoMeYe92} for 2TBSGs is at most $O\big(\frac{-\delta}{\theta} n^{4} \log \epsilon^{-1}\big) = O\big(\frac{n^{5.5} \log 
  \epsilon^{-1}}{(1-\gamma)^3}\big)$. 

Finally, we define a family of \emph{deterministic} 2TBSGs $G_n$ for which we prove matching lower bounds: ~$(i)$~ $\kappa=\Omega\big(\frac{n}{(1-\gamma)^2}\big)$, ~$(ii)$~ $-\delta = \Omega\big(\frac{\sqrt{n}}{1-\gamma}\big)$, ~and ~$(iii)$~ $\frac{1}{\theta(M_{G_n})} = \Omega\big(\frac{n}{(1-\gamma)^2}\big)$.

Note that the upper bounds we prove for the two algorithms are worse than the $O(\frac{n^2 L}{1-\gamma}\log\frac{1}{1-\gamma})$ bound for the value iteration algorithm for the case where every state has two actions. Although our results for existing interior
point methods for solving 2TBSGs are therefore negative, other interior point methods may solve 2TBSGs efficiently. It is also possible that the known bounds for the algorithms studied in this paper do not reflect their true running time.
In fact, we believe that interior point methods are the key to solving 2TBSGs efficiently, and that the study of interior point methods in this context remains an important question for future research.

\subsection{Overview}
In Section~\ref{sec:prelim} we formally introduce the various classes
of problems under consideration. More precisely, in
Subsection~\ref{sec:lcp} we define LCPs, and in
Subsection~\ref{sec:2TBSG} we define 2TBSGs. In
Subsection~\ref{sec:2TBSG2LCP}, we show that the reduction by
Jurdzi{\'n}ski and Savani~\cite{JurdSav08} from {\em deterministic}
2TBSGs to $P$-matrix LCPs generalizes to general 2TBSGs. In
Section~\ref{sec:counter} we estimate the $\kappa$ for which the
matrices of 2TBSGs are $P_*(\kappa)$-matrices, thereby bounding
the running time of the unified interior point
method of Kojima {\em et al.}~\cite{Koj91}. In Section~\ref{sec:ye_alg} we similarly bound the smallest
eigenvalue and the positive $P$-matrix number, proving a bound for
the running time of the potential reduction algorithm of Kojima {\em et
  al.}~\cite{KoMeYe92}.

\section{Preliminaries\label{sec:prelim}}

\subsection{Linear complementarity problems\label{sec:lcp}}

\begin{definition}[\bf Linear complementarity problems]
A \emph{linear complementarity problem} (LCP) is a pair $(M,\q)$,
where $M$ is an $(n \times n)$-matrix and $\q$ is an $n$-vector. A
\emph{solution} to the LCP $(M,\q)$ is a pair of vectors $(\w, \z) \in \R^n$ such
that:
\[
\begin{split}
\w &~=~ \q + M \z \\
\w^\T \z &~=~ 0 \\
\w,\z &~\ge~ \0\enspace.
\end{split}
\]
\end{definition}


\begin{definition}[\bf $P$-matrix]
A matrix $M \in \R^{n \times n}$ is a \emph{$P$-matrix} if and only if
all its principal sub-matrices have a positive determinant.
\end{definition}

The following lemma gives an alternative definition of
$P$-matrices (see, e.g., \cite[Theorem 3.3.4]{CoPaSt92}).

\begin{lemma}\label{lem:alternative}
A matrix $M \in \R^{n \times n}$ is a \emph{$P$-matrix} if and only if
for every non-zero vector $\x \in \R^n$ there is an $i \in [n] = \{1,\dots,n\}$ such that
$\x_i (M\x)_i > 0$.
\end{lemma}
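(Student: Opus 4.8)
The plan is to prove both implications by contraposition, relying on three elementary closure properties of $P$-matrices, each checked in a line: $(a)$ every principal submatrix $M_{SS}$ of a $P$-matrix is a $P$-matrix, since its principal minors are principal minors of $M$; $(b)$ conjugation by a diagonal sign matrix $D=\mathrm{diag}(\pm 1)$ leaves every principal minor unchanged, so $DMD$ is a $P$-matrix whenever $M$ is; and $(c)$ the Schur complement $M/m_{jj}$ of a $P$-matrix with respect to a diagonal pivot is again a $P$-matrix, because $\det\big((M/m_{jj})_{\alpha\alpha}\big)=\det\big(M_{(\alpha\cup\{j\})(\alpha\cup\{j\})}\big)/m_{jj}>0$. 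Throughout put $\y=M\x$.

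For ``$M$ not a $P$-matrix $\Rightarrow$ the condition fails'', choose a principal submatrix $B=M_{\alpha\alpha}$ with $\det B\le 0$ and $|\alpha|$ minimal. If $\det B=0$, take any $\uu\neq\0$ with $B\uu=\0$. If $\det B<0$, then by minimality every proper principal minor of $B$ is positive, so each diagonal entry of $B^{-1}$ (a proper principal minor of $B$ divided by $\det B$) is negative; fixing $i_0\in\alpha$ and letting $\uu$ be the $i_0$-th column of $-B^{-1}$ gives $B\uu=-\ee_{i_0}$ with $\uu_{i_0}>0$. In either case $\uu_i(B\uu)_i\le 0$ for all $i\in\alpha$ and $\uu\neq\0$; extending $\uu$ by zeros off $\alpha$ yields $\x\neq\0$ with $\x_i(M\x)_i\le 0$ for every $i\in[n]$, so the condition fails.

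For ``$M$ a $P$-matrix $\Rightarrow$ the condition holds'', induct on $n$ and assume for contradiction that $\x\neq\0$ satisfies $\x_i\y_i\le 0$ for all $i$. Put $S=\{i:\x_i\neq 0\}$. If $S\subsetneq[n]$, then $(M_{SS}\x_S)_i=\y_i$ for $i\in S$, so $M_{SS}$ --- a smaller $P$-matrix by $(a)$ --- together with the nonzero vector $\x_S$ contradicts the induction hypothesis. Hence $S=[n]$, and conjugating by $D=\mathrm{diag}(\sgn(\x_i))$ (using $(b)$) reduces the situation to a $P$-matrix $N$ with $\vv:=D\x>\0$ and $\w:=N\vv\le\0$. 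If $\w=\0$ then $\det N=0$, impossible for a $P$-matrix. If $w_j=0$ for some $j$, then $(c)$ yields the smaller $P$-matrix $\tilde N=N/n_{jj}$ on the index set $[n]\setminus\{j\}$, and deleting the $j$-th entry of $\vv$ gives a positive vector $\vv'$ with $(\tilde N\vv')_k=w_k\le 0$, again contradicting the induction hypothesis. Finally, if every coordinate of $\w$ is strictly negative, fix any $j$, set $\w(t)=\w+t\ee_j$ and $\vv(t)=N^{-1}\w(t)$, and let $t^{*}$ be the largest $t\in[0,-w_j]$ with $\vv(t)\ge\0$ (which is positive, since $\vv(0)=\vv>\0$). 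Then $\vv(t^{*})\neq\0$ because the coordinates $k\neq j$ of $N\vv(t^{*})=\w(t^{*})$ equal $w_k<0$, and $\w(t^{*})\le\0$. If $\vv(t^{*})$ has a zero coordinate, restricting the system to its support gives a smaller $P$-matrix and contradicts the induction hypothesis as before; if instead $\vv(t^{*})>\0$, maximality of $t^{*}$ forces $t^{*}=-w_j$, whence the $j$-th coordinate of $N\vv(t^{*})$ vanishes and we are back in the previous case. Every branch is contradictory.

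The closure properties $(a)$--$(c)$ and the cofactor remark are routine; the one point needing care is the final step of the second direction, where the recursion must not stall --- ``sliding the residual $\w$ toward the boundary of the nonpositive orthant'' is exactly what forces each reduction to either shrink the support of the current iterate or create a zero coordinate of $\w$. Alternatively, since this is the classical sign-reversal characterization of $P$-matrices due to Fiedler and Pt\'ak, one may simply invoke \cite[Theorem 3.3.4]{CoPaSt92}.
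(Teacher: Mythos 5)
Your proof is correct. Note that the paper does not prove this lemma at all: it is stated with a pointer to \cite[Theorem 3.3.4]{CoPaSt92} (the classical Fiedler--Pt\'ak sign-reversal characterization), so what you have done is supply a self-contained proof of a cited fact rather than mirror an argument in the paper. Your ``not a $P$-matrix $\Rightarrow$ condition fails'' direction is the standard minimal-counterexample/cofactor argument and is fine. Your ``$P$-matrix $\Rightarrow$ condition holds'' direction --- induction on $n$ combining support restriction, conjugation by $D=\mathrm{diag}(\sgn(\x_i))$, the Schur-complement closure property, and the line search that slides $\w+t\ee_j$ to the boundary of the nonpositive orthant --- also checks out: the quotient identity you invoke in $(c)$ is exactly the Schur determinant formula applied to $M_{(\alpha\cup\{j\})(\alpha\cup\{j\})}$, and in the homotopy step the maximizer $t^*$ exists because each coordinate of $\vv(t)$ is affine in $t$, so the feasible set is a closed interval containing $0$; together with the (implicit but trivial) base case $n=1$, every branch indeed ends in a contradiction. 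For comparison, the shorter textbook route for this hard direction is: restrict to the support $S$ of $\x$, observe that $\x_i(M\x)_i\le 0$ yields a nonnegative diagonal matrix $\Lambda$ with $(M_{SS}+\Lambda)\x_S=\0$, and contradict the fact that $\det(A+\Lambda)=\sum_{\alpha}\det(A_{\alpha\alpha})\prod_{i\notin\alpha}\Lambda_{ii}>0$ for any $P$-matrix $A$ and nonnegative diagonal $\Lambda$; your homotopy argument buys nothing beyond self-containedness, but it is valid, and invoking the citation (as you note at the end) is exactly what the paper does.
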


\begin{definition}[\bf Positive $P$-matrix number]\label{def:positive_P_matrix}
The positive $P$-matrix number of a matrix $M \in \R^{n\times n}$ is 
\[
\theta(M) ~=~ \min_{\norm{\x}_2 =1}~\max_{i\in[n]} ~~ \x_i (M\x)_i \enspace. 
\]
\end{definition}

Note that, according to Lemma~\ref{lem:alternative}, $\theta(M) > 0$
if and only if $M$ is a $P$-matrix.

\begin{definition}[\bf $P_*(\kappa)$-matrix]\label{def:kappa}
A matrix $M \in \R^{n\times n}$ is a \emph{$P_*(\kappa)$-matrix}, for $\kappa \ge 0$, if
and only if for every vector $\x \in \R^n$:
\[
\sum_{i\in \delta_-(M)} \x_i(M\x)_i ~+~ (1+4\kappa)\!\!\sum_{i\in \delta_+(M)} \x_i(M\x)_i ~\geq~ 0 \enspace ,
\]
where $\delta_-(M) = \{i \in [n] \mid \x_i(M\x)_i<0\}$ and
$\delta_+(M) = \{i \in [n] \mid \x_i(M\x)_i>0\}$. We say that $M$
is a \emph{$P_*$-matrix} if and only if it is a $P_*(\kappa)$-matrix for some
$\kappa \ge 0$.
\end{definition}

Kojima {\em et al.}~\cite{Koj91} showed that every $P$-matrix is also a
$P_*$-matrix. By definition, a symmetric matrix $M$ is a $P_*(0)$-matrix if and
only if it is \emph{positive semi-definite}. 

\subsection{Two-player turn-based stochastic games\label{sec:2TBSG}}

\begin{definition}[\bf Two-player turn-based stochastic games]
A \emph{two-player turn-based stochastic game} (2TBSG) is a tuple,
$G = (S^1,S^2,(A_i)_{i\in S^1\cup S^2},p,c,\gamma)$, where
\begin{itemize}
\item{}$S^k$, for $k \in \{1,2\}$, is the set of {\em states}
  belonging to Player $k$. We let $S = S^1 \cup S^2$ be the set of all
  states, and we assume that $S^1$ and $S^2$ are disjoint.
\item{}$A_i$, for $i \in S$, is the set of {\em actions} applicable
  from state $i$. We let $A = \bigcup_{i \in S} A_i$ be the set of all
  actions. We assume that $A_i$ and $A_j$ are disjoint for $i \ne
  j$, and that $A_i \ne \emptyset$ for all $i \in S$.

\item
$p: A \to \Delta(S)$ is a map from actions to probability
  distributions over states.
\item{}$c: A \to \R$ is a function that assigns a {\em cost} to every action.
\item{}$\gamma < 1$ is a (positive) {\em discount factor}.
\end{itemize}
\end{definition}

We let $n = |S|$ and $m = |A|$. Furthermore, we let $A^k = \bigcup_{i
  \in S^k} A_i$, for $k \in \{1,2\}$. 
We refer to Player 1 as the minimizer and to Player 2 as the maximizer. 
Figure \ref{fig:example} shows
an example of a simple 2TBSG. The large circles and squares represent
the states controlled by Player 1 and 2, respectively. The edges
leaving the states represent actions. The cost of an action is shown
inside the corresponding diamond shaped square, and the probability
distribution associated with the action is shown by labels on the
edges leaving the diamond shaped square.

We say that an action $a$ is \emph{deterministic} if it moves to a single state with probability 1, i.e., if $p(a)_j = 1$ for some $j \in S$. If all the actions of a 2TBSG $G$ are deterministic we say that $G$ is deterministic.

{\bf Plays and outcomes.} A 2TBSG is played as follows. A pebble is moved from state to state starting from some initial state $i_0 \in S$. When the pebble is at state $i \in S^k$, Player $k$ chooses an action $a \in A_i$, and the pebble is moved to a random new state according to the probability distribution $p(a)$. Let $a^t$ be the $t$-th chosen action for every $t \ge 0$. The sequence of chosen actions is called a \emph{play}, and the {\em outcome} of the play, paid by Player 1 to Player 2, is $\sum_{t \ge 0} \gamma^t \cdot c(a^t)$.  

To simplify notation, we next introduce a way to represent a 2TBSG with vectors
and matrices. Figure \ref{fig:example} shows an example of this representation.

\begin{definition}[\bf Matrix representation]
Let $G = (S^1,S^2,(A_i)_{i\in S^1\cup S^2},p,c,\gamma)$ be a
2TBSG. Assume without loss of generality that $S = [n] = \{1,\dots,n\}$ and $A = [m] =
\{1,\dots,m\}$.
\begin{itemize}
\item
We define the \emph{probability matrix} $P \in
\R^{m\times n}$ by $P_{a,i} = (p(a))_i$, for all $a \in A$ and $i \in
S$.
\item
We define the \emph{cost vector} $\cc \in \R^m$ by $\cc_a = c(a)$,
for all $a \in A$.
\item
We define the \emph{source matrix} $J \in \{0,1\}^{m\times n}$ by 
$J_{a,i} = 1$ if and only if $a \in A_i$, for all $a \in A$ and $i \in
S$.
\item
We define the \emph{ownership matrix} $\I \in \{-1,0,1\}^{n\times n}$
by $\I_{i,j} = 0$ if $i \ne j$, $\I_{i,i} = -1$ if $i \in S^1$, and 
$\I_{i,i} = 1$ if $i \in S^2$.
\end{itemize}
\end{definition}

Note that $P_{a,i}$ is the probability of moving to state $i$ when
using action $a$. For a matrix $M \in \R^{m \times n}$ and a subset of
indices $B \subseteq [m]$, we let $M_B$ be the sub-matrix of $M$
consisting of rows with indices in $B$. Also, for any $i \in [m]$, we 
let $M_i \in \R^{1\times n}$ be the $i$-th row of $M$. We use similar
notation for vectors.

\begin{figure}
\center
\begin{tikzpicture}[->,>=stealth',shorten >=1pt,node distance=5.5cm*0.5,
                    semithick,scale=0.2 ]
       \colorlet{darkgray}{black!75}
\tikzstyle{every state}=[fill=white,draw=black,text=black,font=\small , inner sep=-0.05cm]
\tikzstyle{cost}=[state,regular polygon,regular polygon sides=4,regular polygon rotate=45]
\tikzstyle{max}=[state,regular polygon,regular polygon sides=4, inner sep=0.2cm]
\tikzstyle{se}=[-,shorten >=0pt];

\node [state] (s1){$1$};
\node [max] (s2)[right of=s1]{$2$};
\node [state] (s3)[right of=s2]{$3$};

\foreach \x/\y/\z in {1/7/3,2/-4/2,3/5/-10} {
\node [cost,above of=s\x] (s\x 1) {\y};
\node [cost,below of=s\x] (s\x 2) {\z};
}
\path[se] (s1) edge node[midway,left] {$a_1$} (s11);
\path[se,bend left,dashed] (s1) edge node[midway,left] {$a_2$} (s12);
\path[se,dashed] (s2) edge node[midway,left] {$a_3$} (s21);
\path[se,bend left] (s2) edge node[midway,left] {$a_4$} (s22);
\path[se] (s3) edge node[midway,left] {$a_5$} (s31);
\path[se,bend left,dashed] (s3) edge node[midway,left] {$a_6$} (s32);

\tikzstyle{every node}=[fill=white,text=black,font=\small , inner sep=0.05cm,pos=0.2]

\path (s11) edge node {$\frac{1}{2}$} (s2) 
edge[out=-15,in=130] node[pos=0.095] {$\frac{1}{2}$} (s3)
(s12) edge[bend left,dashed] (s1)
(s21) edge[dashed] (s1)
(s22) edge  node[pos=0.21] {$\frac{1}{2}$} (s1)
edge[bend left]  node {$\frac{1}{4}$} (s2)
edge  node {$\frac{1}{4}$} (s3)
(s31) edge (s2)
(s32) edge[dashed] node{$\frac{1}{3}$} (s2)
edge[bend left,dashed] node[pos=0.173]{$\frac{2}{3}$} (s3);

\end{tikzpicture}

\[
P = \begin{bmatrix}
  \vspace{0.1cm}
  0 & \frac{1}{2} & \frac{1}{2}\\
  \vspace{0.1cm}
  1 & 0 & 0\\
  \vspace{0.1cm}
  1 & 0 & 0\\
  \vspace{0.1cm}
  \frac{1}{2} & \frac{1}{4} & \frac{1}{4}\\
  \vspace{0.1cm}
  0 & 1 & 0\\
  0 & \frac{1}{3} & \frac{2}{3}
\end{bmatrix}
\quad
\cc = \begin{bmatrix}
  \vspace{0.1cm}
  7\\
  \vspace{0.1cm}
  3\\
  \vspace{0.1cm}
  -4\\
  \vspace{0.1cm}
  2\\
  \vspace{0.1cm}
  5\\
  -10
\end{bmatrix}
\quad
J = \begin{bmatrix}
  \vspace{0.1cm}
  1 & 0 & 0\\
  \vspace{0.1cm}
  1 & 0 & 0\\
  \vspace{0.1cm}
  0 & 1 & 0\\
  \vspace{0.1cm}
  0 & 1 & 0\\
  \vspace{0.1cm}
  0 & 0 & 1\\
  0 & 0 & 1
\end{bmatrix}
\quad
\I = \begin{bmatrix}
  \vspace{0.1cm}
  1 & 0 & 0\\
  \vspace{0.1cm}
  0 & -1 & 0\\
  0 & 0 & 1
\end{bmatrix}
\]

\[
\begin{array}{ccl}
  \vspace{0.1cm}
  S^1 &\!=\!& \{2\} \\
  \vspace{0.1cm}
  S^2 &\!=\!& \{1,3\} \\
  \vspace{0.1cm}
  \sigma^1 &\!=\!& \{a_4\} \\
  \vspace{0.1cm}
  \sigma^2 &\!=\!& \{a_1,a_5\} \\
  \sigma &\!=\!& \{a_1,a_4,a_5\} 
\end{array}
\quad
P_\sigma = \begin{bmatrix}
  \vspace{0.1cm}
  0 & \frac{1}{2} & \frac{1}{2}\\
  \vspace{0.1cm}
  \frac{1}{2} & \frac{1}{4} & \frac{1}{4}\\
  0 & 1 & 0\\
\end{bmatrix}
\quad
\cc_\sigma = \begin{bmatrix}
  \vspace{0.1cm}
  7\\
  \vspace{0.1cm}
  ~2~\\
  5\\
\end{bmatrix}
\quad~~
\begin{array}{|c|ccc|}
\hline
\vspace{-0.35cm}
&&& \\
t & \multicolumn{3}{c|}{\ee_1^\T P_\sigma^t}\\
\hline
\vspace{-0.35cm}
&&& \\
0 & 1 & 0 & 0 \\
\vspace{-0.35cm}
&&& \\
1 & 0 & \frac{1}{2} & \frac{1}{2} \\
\vspace{-0.35cm}
&&& \\
2 & \frac{2}{8} & \frac{5}{8} & \frac{1}{8} \\
\vspace{-0.35cm}
&&& \\
3 & \frac{10}{32} & \frac{13}{32} & \frac{9}{32} \\
\vspace{-0.5cm}
&&& \\
\vdots &  & \vdots & 
\end{array}
\]
\caption{Example of a simple 2TBSG, as well as a strategy profile $\sigma = (\sigma^1,\sigma^2)$ shown with solid lines.}
\label{fig:example}
\end{figure}

\begin{definition}[\bf Strategies and strategy profiles]
A \emph{strategy} $\sigma^k: S^k \to A^k$ for Player $k \in \{1,2\}$ maps every
state $i\in S^k$ to an action $\sigma^k(i) \in A_i$ applicable from
state $i$. A \emph{strategy profile} $\sigma = (\sigma^1,\sigma^2)$ is
a pair of strategies, one for each player. We let $\Sigma^k$ be the
set of strategies for Player $k$, and $\Sigma = \Sigma^1 \times
\Sigma^2$ be the set of strategy profiles.
\end{definition}

We view a strategy profile $\sigma = (\sigma^1,\sigma^2)$ as a map
$\sigma: S \to A$ from states to actions, such that $\sigma(i) =
\sigma^k(i)$ for all $i \in S^k$ and $k \in \{1,2\}$.

A strategy $\sigma^k \in \Sigma^k$ can also be viewed as a subset $\sigma^k
\subseteq A^k$ of actions such that $\sigma^k \cap A_i =
\{\sigma^k(i)\}$ for all $i \in S^k$. A strategy profile $\sigma =
(\sigma^1,\sigma^2) \in \Sigma$ can be viewed similarly as a subset of
actions $\sigma = \sigma^1\cup \sigma^2 \subseteq A$. Note that
$P_\sigma$ is an $n\times n$ matrix for every $\sigma \in \Sigma$. We
assume without loss of generality that actions are ordered such that $J_\sigma = I$, where
$I$ is the identity matrix, for all $\sigma \in \Sigma$. Figure
\ref{fig:example} shows a strategy profile $\sigma$ represented by
solid lines, the corresponding matrix $P_\sigma$, and the vector
$\cc_\sigma$.

The matrix $P_\sigma$ defines a Markov chain. In particular, the
probability of being in the $j$-th state after $t$ steps when starting
in state $i$ is $(P^t_\sigma)_{i,j}$. In Figure \ref{fig:example} such
probabilities are shown in the table in the lower right corner, where
$\ee_1$ is the first unit vector.
We say that the players \emph{play
according to $\sigma$} if whenever the pebble is at state $i \in S^k$,
Player $k$ uses the action $\sigma(i)$. Let $i\in S$ be some state and $t$ some number. 
The expected cost 
of the $t$-th action used when starting in state $i$ is $(P^t_\sigma)_i \cc_\sigma$. In particular, the expected outcome is
$\sum_{t=0}^\infty \gamma^t (P^t_\sigma)_i \cc_\sigma$. The following
lemma shows that this infinite series always converges.

\begin{lemma}\label{lem:converges}
For every strategy profile $\sigma\in\Sigma$ the matrix $(I-\gamma
P_\sigma)$ is non-singular, and
\[
(I-\gamma P_\sigma)^{-1} ~=~ \sum_{t=0}^\infty \gamma^t P^t_\sigma \enspace.
\]
\end{lemma}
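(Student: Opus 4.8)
The plan is to establish both claims from the standard theory of matrix power series applied to $\gamma P_\sigma$. First I would observe that $P_\sigma$ is a stochastic matrix (each row is a probability distribution over states), so every entry is in $[0,1]$ and each row sums to $1$; consequently its spectral radius satisfies $\rho(P_\sigma) \le 1$ (for instance, by Gershgorin's disk theorem, or simply because $\norm{P_\sigma}_\infty = 1$). Since $0 < \gamma < 1$, this gives $\rho(\gamma P_\sigma) = \gamma \rho(P_\sigma) \le \gamma < 1$.

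The non-singularity of $I - \gamma P_\sigma$ then follows because any eigenvalue of $I - \gamma P_\sigma$ has the form $1 - \gamma\lambda$ where $\lambda$ is an eigenvalue of $P_\sigma$; since $|\gamma\lambda| \le \gamma < 1$, we have $1 - \gamma\lambda \ne 0$, so $0$ is not an eigenvalue and the matrix is invertible. For the series identity, I would use the standard Neumann series argument: because $\rho(\gamma P_\sigma) < 1$, the partial sums $S_N = \sum_{t=0}^N \gamma^t P_\sigma^t$ converge (the tail is dominated geometrically, e.g. using a submultiplicative norm in which $\norm{\gamma P_\sigma} < 1$, such as $\norm{\cdot}_\infty$ directly since $\norm{\gamma P_\sigma}_\infty = \gamma < 1$). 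Then the telescoping identity $(I - \gamma P_\sigma) S_N = I - \gamma^{N+1} P_\sigma^{N+1}$, together with $\gamma^{N+1} P_\sigma^{N+1} \to 0$ as $N \to \infty$, yields $(I - \gamma P_\sigma)\sum_{t=0}^\infty \gamma^t P_\sigma^t = I$, and since $I - \gamma P_\sigma$ is invertible this gives the claimed formula for its inverse.

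I expect the only mild subtlety — not really an obstacle — is justifying convergence of the matrix series cleanly; this is handled immediately by noting $\norm{\gamma P_\sigma}_\infty = \gamma \norm{P_\sigma}_\infty = \gamma < 1$, so $\norm{\gamma^t P_\sigma^t}_\infty \le \gamma^t$ and the series converges absolutely entrywise. Everything else is routine linear algebra, and the whole argument is short.
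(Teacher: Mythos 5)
Your proof is correct: the Neumann-series argument via $\norm{\gamma P_\sigma}_\infty = \gamma < 1$, the telescoping identity, and invertibility from the eigenvalue bound is exactly the standard argument this lemma rests on. The paper itself omits the proof as ``simple,'' deferring to a reference, so your write-up supplies precisely the routine argument the authors had in mind.
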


The simple proof of Lemma~\ref{lem:converges} has been omitted. For
details we refer to, e.g., \cite{Han12}.

\begin{definition}[\bf Value vectors]\label{def:values}
For every strategy profile $\sigma \in \Sigma$ we define the
\emph{value vector} $\vv^\sigma \in \R^n$ by:
\[
\vv^\sigma ~=~ (I-\gamma P_\sigma)^{-1} \cc_\sigma \enspace.
\]
\end{definition}

The $i$-th component of the value vector $\vv^\sigma$, for a given
strategy profile $\sigma$, is the expected outcome over plays starting in $i\in S$
 when the players play
according to $\sigma$. 

It follows from Lemma~\ref{lem:converges} and
Definition~\ref{def:values} that $\vv^\sigma$ is the unique solution
to:
\begin{equation}\label{eq:one_step}
\vv^\sigma ~=~ \cc_\sigma + \gamma P_\sigma \vv^\sigma \enspace.
\end{equation}

\begin{definition}[\bf Lower and upper values]
We define the \emph{lower value vector} $\underline{\vv} \in \R^n$ and 
\emph{upper value vector} $\overline{\vv} \in \R^n$ by:
\begin{align*}
\forall i \in S: \quad \underline{\vv}_i ~&=~ \min_{\sigma^1 \in \Sigma^1} \max_{\sigma^2 \in
  \Sigma^2} ~ \vv^{(\sigma^1,\sigma^2)}_i \\
\forall i \in S: \quad 
\overline{\vv}_i ~&=~ \max_{\sigma^2 \in \Sigma^2} \min_{\sigma^1 \in
  \Sigma^1} ~ \vv^{(\sigma^1,\sigma^2)}_i \enspace.
\end{align*}
\end{definition}

Shapley~\cite{Sha53} showed that $\underline{\vv} =
  \overline{\vv}$. Hence, we may define the \emph{optimal value 
  vector} uniquely as $\vv^* := \underline{\vv} = \overline{\vv}$. 

\begin{definition}[\bf Optimal strategies]
A strategy $\sigma^1 \in \Sigma^1$ is \emph{optimal} if and only if:
\[
\forall i \in S: \quad \max_{\sigma^2 \in
  \Sigma^2} ~ \vv^{(\sigma^1,\sigma^2)}_i ~=~ \vv^*_i \enspace.
\]
Similarly, a strategy $\sigma^2 \in \Sigma^2$ is \emph{optimal} if and
only if:
\[
\forall i \in S: \quad \min_{\sigma^1 \in
  \Sigma^1} ~ \vv^{(\sigma^1,\sigma^2)}_i ~=~ \vv^*_i \enspace.
\]
We say that a strategy profile $\sigma = (\sigma^1,\sigma^2) \in
\Sigma$ is \emph{optimal} if and only if $\sigma^1$ and $\sigma^2$ are
optimal. Equivalently, $\sigma$ is a \emph{Nash equilibrium}.
\end{definition}

Note that an optimal strategy for Player 1 (Player 2) minimizes
(maximizes) the values of all states simultaneously. Hence, it is not
obvious that optimal strategies exist. This was shown, however,
by Shapley~\cite{Sha53}. \emph{Solving} a 2TBSG means finding
an optimal strategy profile, or equivalently the optimal value
vector.

\begin{definition}[\bf Reduced costs]\label{def:reduced}
For every strategy profile $\sigma \in \Sigma$ we define the vector of
\emph{reduced costs} $\bar{\cc}^\sigma \in \R^m$ by:
\[
\forall i \in S, a \in A_i: \quad
\bar{\cc}^\sigma_a ~=~ \cc_a + \gamma P_a \vv^\sigma -
\vv^\sigma_i \enspace.
\]
\end{definition}

The following theorem establishes a connection between optimal
strategies and reduced costs. For details see, e.g.,
\cite{HansenMZ13,Han12}.

\begin{theorem}[\bf Optimality condition]\label{thm:optimality}
A strategy profile $\sigma \in \Sigma$ is optimal if and only if
$(\bar{\cc}^\sigma)_{A^1} \ge 0$ and
$(\bar{\cc}^\sigma)_{A^2} \le 0$.
\end{theorem}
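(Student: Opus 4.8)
The plan is to prove the two directions separately, using the characterization of $\vv^\sigma$ via the fixed-point equation \eqref{eq:one_step} together with the contraction/monotonicity properties of the operators $\vv \mapsto \cc_\sigma + \gamma P_\sigma \vv$. First I would record the key observation that the reduced costs measure, for each action $a \in A_i$, how much a one-step deviation to $a$ (followed by continuing with $\sigma$) changes the value at state $i$: if $\sigma'$ agrees with $\sigma$ except that $\sigma'(i) = a$, then a direct computation from \eqref{eq:one_step} gives $\vv^{\sigma'} - \vv^\sigma = (I - \gamma P_{\sigma'})^{-1} (\bar{\cc}^\sigma_a \cdot \ee_i)$, and since $(I-\gamma P_{\sigma'})^{-1} = \sum_t \gamma^t P_{\sigma'}^t$ has nonnegative entries (Lemma~\ref{lem:converges}), the sign of the change in $\vv^{\sigma'}_i$ (in fact in every coordinate) is controlled by the sign of $\bar{\cc}^\sigma_a$.

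For the forward direction, suppose $\sigma$ is optimal. If some action $a \in A^1$ (say $a \in A_i$, $i \in S^1$) had $\bar{\cc}^\sigma_a < 0$, the deviation computation above shows Player~1 can switch to $a$ and strictly decrease $\vv^{\sigma'}_i$ while not increasing any other coordinate; iterating local improvements (or appealing directly to the fact that the optimal value is the least fixed point of the minimizing Bellman operator) contradicts optimality of $\sigma^1$. The symmetric argument rules out $\bar{\cc}^\sigma_a > 0$ for $a \in A^2$. For the reverse direction, suppose $(\bar{\cc}^\sigma)_{A^1} \ge 0$ and $(\bar{\cc}^\sigma)_{A^2} \le 0$. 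I would show $\vv^\sigma = \vv^*$ by showing $\vv^\sigma$ satisfies the Shapley optimality equations: for $i \in S^1$, $\vv^\sigma_i = \min_{a \in A_i}(\cc_a + \gamma P_a \vv^\sigma)$, which is exactly the statement that $\bar{\cc}^\sigma_a \ge 0$ for all $a \in A_i$ with equality at $a = \sigma(i)$; and dually for $i \in S^2$. Since $\vv^*$ is the unique vector satisfying these equations (Shapley), $\vv^\sigma = \vv^*$; moreover the one-sided inequalities show $\sigma^1$ guarantees $\max_{\sigma^2}\vv^{(\sigma^1,\sigma^2)} \le \vv^\sigma = \vv^*$ and $\sigma^2$ guarantees the reverse, so both strategies are optimal.

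The main obstacle is making the ``iterate local improvements'' step rigorous in the forward direction without circularity: one must know that a strategy from which no profitable single-switch deviation exists is globally optimal for that player against the fixed opponent strategy, and that the resulting best responses interact correctly to give a Nash equilibrium. The cleanest route is to invoke the standard MDP fact (one player, opponent fixed) that the policy-improvement stopping condition coincides with optimality — this follows from the monotonicity of the Bellman operator and uniqueness of its fixed point — and then use Shapley's minimax theorem ($\underline{\vv} = \overline{\vv}$) to glue the two players' best-response conditions together. Since the excerpt already cites \cite{HansenMZ13,Han12} for the details, in the paper I would state the equivalence with a short proof along these lines and defer the fully spelled-out MDP policy-improvement argument to those references.
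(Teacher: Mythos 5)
The paper never proves Theorem~\ref{thm:optimality}: it states the optimality criterion and defers the details to \cite{HansenMZ13,Han12}, so there is no in-paper argument to compare yours against. Judged on its own, your proposal follows the standard route and its two pillars are correct. The deviation identity $\vv^{\sigma'}-\vv^\sigma=(I-\gamma P_{\sigma'})^{-1}\bigl(\bar{\cc}^\sigma_a\,\ee_i\bigr)$ is right, and since $(I-\gamma P_{\sigma'})^{-1}=\sum_{t\ge 0}\gamma^t P_{\sigma'}^t$ is entrywise nonnegative with diagonal entries at least $1$, the sign of $\bar{\cc}^\sigma_a$ controls the change in every coordinate and changes coordinate $i$ strictly. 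The reverse direction via the Shapley optimality equations and monotonicity is also fine, and deferring the routine MDP details to the cited references matches what the paper itself does.

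Two adjustments would tighten it. First, in the forward direction no iterated improvement and no appeal to policy-iteration theory is needed, so the circularity you worry about never arises: optimality of both $\sigma^1$ and $\sigma^2$ already forces $\vv^\sigma=\vv^*$ (it is sandwiched between $\min_{\tau^1}\vv^{(\tau^1,\sigma^2)}=\vv^*$ and $\max_{\tau^2}\vv^{(\sigma^1,\tau^2)}=\vv^*$), and then a single switch at $i\in S^1$ to an action with $\bar{\cc}^\sigma_a<0$ gives $\vv^{\sigma'}_i<\vv^*_i$ against the fixed $\sigma^2$, which contradicts the optimality of $\sigma^2$ (not of $\sigma^1$, as you wrote): Player~1 beating $\vv^*_i$ against $\sigma^2$ shows $\sigma^2$ fails to guarantee the value. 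The symmetric switch handles $a\in A^2$. Second, in the reverse direction the sentence ``the one-sided inequalities show $\sigma^1$ guarantees $\max_{\tau^2}\vv^{(\sigma^1,\tau^2)}\le\vv^\sigma$'' is exactly where monotonicity must be invoked explicitly: for any $\tau^2$ and $\pi=(\sigma^1,\tau^2)$ the hypothesis gives $\cc_\pi+\gamma P_\pi\vv^\sigma\le\vv^\sigma$ coordinatewise, and multiplying by the nonnegative matrix $(I-\gamma P_\pi)^{-1}$ yields $\vv^\pi\le\vv^\sigma$; symmetrically $\vv^{(\tau^1,\sigma^2)}\ge\vv^\sigma$ for every $\tau^1$. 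Once you have these two families of inequalities, Shapley's determinacy $\underline{\vv}=\overline{\vv}=\vv^*$ immediately gives $\vv^\sigma=\vv^*$ and the optimality of both strategies, so the separate uniqueness argument for the optimality equations is not needed (though it is harmless).
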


\subsection{LCPs for solving 2TBSGs}\label{sec:2TBSG2LCP}

Jurdzi{\'n}ski and Savani~\cite{JurdSav08} showed how the problem of
solving \emph{deterministic} 2TBSGs can be reduced to the problem of
solving $P$-matrix LCPs. We next show that the same reduction works
for general 2TBSGs.

Throughout this section we let $G = (S^1,S^2,(A_i)_{i\in
S},p,c,\gamma)$ be some 2TBSG and $(P,\cc,J,\I,\gamma)$ be the
corresponding matrix representation. We assume that there are exactly
two actions available from every state, i.e., $|A_i| = 2$ for all $i
\in S$, and we partition $A$ into two disjoint strategy profiles $\sigma$
and $\tau$.

An LCP for solving $G$ can be derived as follows.
Consider the following system of linear equations and inequalities,
where $\w,\y,\z \in \R^n$ are variables.
\begin{align}
(I-\gamma P_{\sigma}) \y - \I\w &~=~ \cc_{\sigma} \label{eq:lcp1}\\
(I-\gamma P_{\tau}) \y - \I\z &~=~ \cc_{\tau} \label{eq:lcp2}\\
\w^\T\z &~=~ 0 \label{eq:lcp3}\\
\w,\z  &~\ge~ \0 \label{eq:lcp4}
\end{align}

\begin{lemma}\label{lem:solve}
The system (\ref{eq:lcp1}), (\ref{eq:lcp2}), (\ref{eq:lcp3}), and (\ref{eq:lcp4}) has a unique solution $\w,\y,\z \in \R^n$, where $\y$ is
the optimal value vector for $G$. Also, a strategy profile $\pi$ is optimal if and only if $\pi \subseteq
\{\sigma(i) \mid i \in [n] \land \w_i = 0\} \cup \{\tau(i) \mid i \in
[n] \land \z_i = 0\}$, and such a strategy profile exists.
\end{lemma}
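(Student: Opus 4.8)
The plan is to interpret the linear system \eqref{eq:lcp1}--\eqref{eq:lcp2} as a parametrized family of value equations and to read off the complementarity condition \eqref{eq:lcp3}--\eqref{eq:lcp4} as exactly the optimality condition of Theorem~\ref{thm:optimality}. Concretely, for any fixed $\y \in \R^n$, equations \eqref{eq:lcp1} and \eqref{eq:lcp2} \emph{define} $\w$ and $\z$ uniquely via $\w = \I^{-1}\big((I-\gamma P_\sigma)\y - \cc_\sigma\big)$ and $\z = \I^{-1}\big((I-\gamma P_\tau)\y - \cc_\tau\big)$ (recall $\I$ is diagonal with entries $\pm 1$, hence $\I^{-1} = \I$). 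So the whole system reduces to finding a single vector $\y$ for which the resulting $\w,\z$ satisfy $\w,\z \ge \0$ and $\w^\T\z = 0$. First I would rewrite the $i$-th coordinates of $\w$ and $\z$: since row $i$ of $I-\gamma P_\sigma$ applied to $\y$ gives $\y_i - \gamma P_{\sigma(i)}\y$, we get $\I_{i,i}\w_i = \y_i - \gamma P_{\sigma(i)}\y - \cc_{\sigma(i)}= -\bar\cc^{\,\y}_{\sigma(i)}$ where $\bar\cc^{\,\y}_a := \cc_a + \gamma P_a \y - \y_i$ is the ``reduced cost with respect to $\y$'' (Definition~\ref{def:reduced}, but with an arbitrary $\y$ in place of $\vv^\sigma$); similarly $\I_{i,i}\z_i = -\bar\cc^{\,\y}_{\tau(i)}$. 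Thus $\w_i \ge 0$ says $\bar\cc^{\,\y}_{\sigma(i)} \le 0$ if $i \in S^2$ and $\bar\cc^{\,\y}_{\sigma(i)} \ge 0$ if $i \in S^1$, and likewise for $\z_i$ and $\tau(i)$; the complementarity $\w_i\z_i = 0$ says at least one of the two actions at state $i$ has reduced cost zero (relative to $\y$).

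Next I would prove existence. Take $\y := \vv^*$, the optimal value vector, and let $\pi = (\pi^1,\pi^2)$ be an optimal strategy profile (both exist by Shapley, as recorded after the definition of optimal strategies). Key fact: $\vv^* = \vv^\pi$, so the reduced costs $\bar\cc^{\,\y}$ defined above coincide with $\bar\cc^{\,\pi}$, and for the action $\pi(i)$ chosen at state $i$ we have $\bar\cc^{\,\pi}_{\pi(i)} = 0$ (since $\vv^\pi_i = \cc_{\pi(i)} + \gamma P_{\pi(i)}\vv^\pi$ by \eqref{eq:one_step}). By Theorem~\ref{thm:optimality}, $(\bar\cc^{\,\y})_{A^1}\ge 0$ and $(\bar\cc^{\,\y})_{A^2}\le 0$. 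Combining: for each state $i$, whichever of $\sigma(i),\tau(i)$ equals $\pi(i)$ has reduced cost zero, giving a zero coordinate in $\w$ or $\z$; and the sign conditions from Theorem~\ref{thm:optimality} give $\w,\z\ge\0$ exactly via the $\I$-sign bookkeeping above. Hence $\w^\T\z=0$ and $\w,\z\ge\0$, so $(\w,\y,\z)$ with $\y=\vv^*$ solves the system.

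For uniqueness I would argue that any solution must have $\y = \vv^*$. Suppose $(\w,\y,\z)$ solves the system. From $\w^\T\z=0$ and $\w,\z\ge\0$, for each $i$ either $\w_i=0$ or $\z_i=0$; pick the corresponding action ($\sigma(i)$ in the first case, $\tau(i)$ in the second, either if both) to define a strategy profile $\pi$. As computed above, $\w_i=0$ (resp. $\z_i=0$) forces $\bar\cc^{\,\y}_{\pi(i)}=0$, i.e.\ $\y_i = \cc_{\pi(i)} + \gamma P_{\pi(i)}\y$ for all $i$; by \eqref{eq:one_step} and Lemma~\ref{lem:converges} this uniquely identifies $\y = \vv^\pi$. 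The remaining coordinates of $\w,\z$ (the ones not forced to zero) are nonnegative, and tracking signs through $\I$ shows $(\bar\cc^{\,\y})_{A^1}\ge 0$, $(\bar\cc^{\,\y})_{A^2}\le 0$, i.e.\ $(\bar\cc^{\,\pi})_{A^1}\ge 0$, $(\bar\cc^{\,\pi})_{A^2}\le 0$; by Theorem~\ref{thm:optimality}, $\pi$ is optimal, so $\y = \vv^\pi = \vv^*$. Since $\y$ is now pinned down, $\w$ and $\z$ are determined by \eqref{eq:lcp1}--\eqref{eq:lcp2}, giving uniqueness. The same computation yields the characterization of optimal profiles: $\pi$ is optimal iff $(\bar\cc^{\,\vv^*})_{A^1}\ge 0$ and $(\bar\cc^{\,\vv^*})_{A^2}\le 0$ iff every chosen action has zero reduced cost (optimal actions are exactly the tight ones at $\vv^*$), iff $\pi \subseteq \{\sigma(i): \w_i=0\}\cup\{\tau(i): \z_i=0\}$.

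The main obstacle is the sign bookkeeping: one must carefully match the three cases ($i\in S^1$ vs.\ $i\in S^2$; which of $\w_i,\z_i$ is zero; reduced cost of $\sigma(i)$ vs.\ $\tau(i)$) so that ``$\w,\z\ge\0$'' translates \emph{exactly} into the asymmetric condition ``$(\bar\cc)_{A^1}\ge 0$, $(\bar\cc)_{A^2}\le 0$'' of Theorem~\ref{thm:optimality}, and to be sure that when both $\w_i=0$ and $\z_i=0$ the profile $\pi$ can be chosen consistently. A secondary subtlety is the direction ``$\pi$ optimal $\Rightarrow$ $\pi$ contained in the zero-set'': this needs that \emph{every} optimal action at a state is tight at $\vv^*$, which follows from $\vv^* = \vv^{\pi'}$ for any optimal $\pi'$ together with \eqref{eq:one_step}.
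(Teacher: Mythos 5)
Your proposal is correct and follows essentially the same route as the paper's proof: both read $\w$ and $\z$ off equations (\ref{eq:lcp1})--(\ref{eq:lcp2}) as $\I$-signed reduced costs with respect to $\y$, obtain existence by plugging in $\y=\vv^*$ with an optimal profile and Theorem~\ref{thm:optimality}, and obtain uniqueness by extracting a profile $\pi$ from the zero coordinates, deducing $\y=\vv^\pi$ via (\ref{eq:one_step}), and invoking Theorem~\ref{thm:optimality} again. You are in fact slightly more explicit than the paper on the ``only if'' direction of the characterization (every optimal action is tight at $\vv^*$), which the paper leaves implicit; only your phrasing of the intermediate equivalence in the last step (the sign condition on all of $A^1,A^2$, which holds regardless of $\pi$) should be tightened to ``every action chosen by $\pi$ has zero reduced cost at $\vv^*$.''
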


\begin{proof}
Observe first that if we let $\y = \vv^*$, then (\ref{eq:lcp1}) and (\ref{eq:lcp2}) ensure that $\w$ and $\z$ correspond to the reduced costs for an optimal strategy profile $\pi^*$. 
More precisely, we have:
\[
\begin{array}{rc}
\forall i\in S^1: & \bar{\cc}^{\pi^*}_{\sigma(i)} ~=~ \w_i ~~\text{and}~~ \bar{\cc}^{\pi^*}_{\tau(i)} ~=~ \z_i\\
\forall i\in S^2: & \bar{\cc}^{\pi^*}_{\sigma(i)} ~=~ -\w_i ~~\text{and}~~ \bar{\cc}^{\pi^*}_{\tau(i)} ~=~ -\z_i
\end{array}
\]
It then follows from definitions \ref{def:values} and \ref{def:reduced} that $\w^\T\z = 0$, and from Theorem \ref{thm:optimality} that $\w,\z \ge \0$. Hence the optimal value vector and the corresponding reduced costs are a solution to (\ref{eq:lcp1}), (\ref{eq:lcp2}), (\ref{eq:lcp3}), and (\ref{eq:lcp4}). We next show that for every solution $\w,\y,\z \in \R^n$, $\y$ is the optimal value vector. Since the optimal value vector is unique it follows that the system has a unique solution.

Let $\w,\y,\z \in \R^n$ be a solution to (\ref{eq:lcp1}), (\ref{eq:lcp2}), (\ref{eq:lcp3}), and (\ref{eq:lcp4}), and let $B = \{\sigma(i) \mid i \in [n] \land \w_i = 0\} \cup \{\tau(i) \mid i \in [n] \land \z_i = 0\}$. Furthermore, let $\Pi$ be the set of all strategy
profiles contained in $B$. Since $\w$ and $\z$ satisfy
(\ref{eq:lcp3}), we know that $\Pi \ne \emptyset$.

Let $a \in B \cap A_i$ for some $i \in S$.
It follows from (\ref{eq:lcp1}) and (\ref{eq:lcp2})
that $\y_i-\gamma P_{a} \y = \cc_a$. Hence, we get
from Equation (\ref{eq:one_step}) that $\y = \vv^\pi$, for every $\pi
\in \Pi$. Combining this with (\ref{eq:lcp1}), (\ref{eq:lcp2}), and
(\ref{eq:lcp4}) we get that:
\begin{align*}
\forall i \in S^1, a \in A_i: \quad 
\vv^\pi_i - \gamma P_a \vv^\pi &~\le~ \cc_a\\
\forall i \in S^2, a \in A_i: \quad 
\vv^\pi_i - \gamma P_a \vv^\pi &~\ge~ \cc_a
\end{align*}
It follows from Definition~\ref{def:reduced} and
Theorem~\ref{thm:optimality} that $\pi$ is an optimal strategy
profile.
\end{proof}

We know from Lemma~\ref{lem:converges} that $(I - \gamma P_\tau)$ is
non-singular. Hence, (\ref{eq:lcp2}) can be equivalently expressed as:
\[
\y = (I-\gamma
P_{\tau})^{-1}(\cc_{\tau}+\I\z)~.
\]
Eliminating $\y$ in
(\ref{eq:lcp1}) we then get the following equivalent equation:
\[
(I-\gamma P_{\sigma}) (I-\gamma P_{\tau})^{-1}(\cc_{\tau}+\I\z) - \I\w
~=~ \cc_{\sigma} \quad\iff
\]
\[
\I(I-\gamma P_{\sigma}) (I-\gamma P_{\tau})^{-1}\I\z - \w
~=~ \I\cc_{\sigma} - \I(I-\gamma P_{\sigma}) (I-\gamma
P_{\tau})^{-1}\cc_{\tau}
\quad\iff
\]
\begin{equation}\label{eq:complicated}
\w ~=~ (\I(I-\gamma P_{\sigma}) (I-\gamma P_{\tau})^{-1}\cc_{\tau} -
\I\cc_{\sigma}) +
\I(I-\gamma P_{\sigma}) (I-\gamma P_{\tau})^{-1}\I\z 
\end{equation}
To simplify equation (\ref{eq:complicated}) we make the following
definition.
\begin{definition}[\bf $M_{G,\sigma,\tau}$ and
    $\q_{G,\sigma,\tau}$]\label{def:solves}
Let $G$ be a 2TBSG with matrix representation $(P,\cc,J,\I,\gamma)$,
and let the set of actions of $G$ be partitioned into two disjoint
strategy profiles $\sigma$ and $\tau$. We define $M_{G,\sigma,\tau}
\in \R^{n \times n}$ and $\q_{G,\sigma,\tau} \in \R^n$ by:
\begin{align*}
M_{G,\sigma,\tau} &~=~ \I(I-\gamma P_{\sigma}) (I-\gamma P_{\tau})^{-1}\I\\
\q_{G,\sigma,\tau} &~=~ \I(I-\gamma P_{\sigma}) (I-\gamma P_{\tau})^{-1}\cc_{\tau} -
\I\cc_{\sigma} ~.
\end{align*}
\end{definition}
Equation (\ref{eq:complicated}) can now be stated as $\w =
\q_{G,\sigma,\tau} + M_{G,\sigma,\tau} \z$. 
It follows that (\ref{eq:lcp1}), (\ref{eq:lcp2}), (\ref{eq:lcp3}), and
(\ref{eq:lcp4}) can be equivalently stated as $\y = (I-\gamma
P_{\tau})^{-1}(\cc_{\tau}+\I\z)$ and:
\[
\begin{split}
\w &~=~ \q_{G,\sigma,\tau} + M_{G,\sigma,\tau} \z \\
\w^\T \z &~=~ 0 \\
\w,\z &~\ge~ \0\enspace.
\end{split}
\]
Hence, a solution to the LCP $(M_{G,\sigma,\tau},\q_{G,\sigma,\tau})$
gives a solution to  (\ref{eq:lcp1}), (\ref{eq:lcp2}),
(\ref{eq:lcp3}), and (\ref{eq:lcp4}), which, using
Lemma~\ref{lem:solve}, solves the 2TBSG $G$. We say that 
$(M_{G,\sigma,\tau},\q_{G,\sigma,\tau})$ \emph{solves} $G$.

Jurdzi{\'n}ski and Savani~\cite{JurdSav08} showed that $M_{G,\sigma,\tau}$
is a $P$-matrix when $G$ is deterministic. To prove the same for
general 2TBSGs we introduce the following lemma. The lemma is also
used in the later parts of the paper.
To understand the use of
$\vv$ in the lemma observe that $\x^\T (I-\gamma
P_{\sigma}) (I-\gamma P_{\tau})^{-1} \x = \x^\T (I-\gamma
P_{\sigma}) \vv$.

\begin{lemma}\label{lem:xj}
Let $\x$ be a non-zero vector, $\vv=(I-\gamma P_\tau)^{-1}\x$, and
$j\in \argmax_i \abs {\vv_i}$. Then:
 \begin{align}
\abs {\x_j} &~\geq~ (1-\gamma)\abs {\vv_j} \enspace
. \label{eq:xj-size}\\
\forall i: \abs {\x_i} &~\leq~ (1+\gamma) \abs {\vv_j} \enspace
.\label{eq:xi-size} \\
\x_j((I-\gamma P_{\sigma}) (I-\gamma P_{\tau})^{-1} \x)_j&~\geq~ (1-\gamma) \abs{\x_j \vv_j}  ~>~ 0 \enspace
.\label{eq:Ax-positive} 
\end{align}
\end{lemma}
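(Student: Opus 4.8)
The plan is to work directly from the identity $\x = (I-\gamma P_\tau)\vv$, obtained by multiplying the defining equation $\vv = (I-\gamma P_\tau)^{-1}\x$ on the left by $(I-\gamma P_\tau)$. Coordinate-wise this reads $\x_i = \vv_i - \gamma(P_\tau\vv)_i$ for every $i$. The one structural fact I need about $P_\tau$ and $P_\sigma$ is that they are row-stochastic (each row is a probability distribution over states), so for any index $i$ the value $(P_\tau\vv)_i = \sum_k (P_\tau)_{i,k}\vv_k$ is a convex combination of the entries of $\vv$; hence $|(P_\tau\vv)_i| \le \max_k |\vv_k| = |\vv_j|$, and the same bound holds for $P_\sigma$.

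With this, (\ref{eq:xi-size}) is immediate from the triangle inequality: $|\x_i| = |\vv_i - \gamma(P_\tau\vv)_i| \le |\vv_i| + \gamma|(P_\tau\vv)_i| \le (1+\gamma)|\vv_j|$. Inequality (\ref{eq:xj-size}) comes out the same way via the reverse triangle inequality: $|\x_j| = |\vv_j - \gamma(P_\tau\vv)_j| \ge |\vv_j| - \gamma|(P_\tau\vv)_j| \ge (1-\gamma)|\vv_j|$.

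For (\ref{eq:Ax-positive}) I would first note that $\vv \ne \0$, since $\x \ne \0$ and $(I-\gamma P_\tau)$ is non-singular by Lemma~\ref{lem:converges}; by the choice of $j$ this forces $\vv_j \ne 0$. Replacing $\x$ by $-\x$ if necessary — which negates $\vv$ and $\x_j$ but leaves both $\x_j\vv_j$ and $\x_j((I-\gamma P_\sigma)(I-\gamma P_\tau)^{-1}\x)_j$ unchanged — I may assume $\vv_j > 0$. Since $\gamma|(P_\tau\vv)_j| < |\vv_j| = \vv_j$ (using $\gamma < 1$), the identity $\x_j = \vv_j - \gamma(P_\tau\vv)_j$ shows $\x_j > 0$, so $|\x_j\vv_j| = \x_j\vv_j$. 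Now, as remarked just before the lemma, the $j$-th coordinate of $(I-\gamma P_\sigma)(I-\gamma P_\tau)^{-1}\x = (I-\gamma P_\sigma)\vv$ is simply $\vv_j - \gamma(P_\sigma\vv)_j$. Because $\vv_j = \max_k|\vv_k| \ge \vv_k$ for all $k$ and $P_\sigma$ is row-stochastic, we have $(P_\sigma\vv)_j \le \vv_j$, hence $\vv_j - \gamma(P_\sigma\vv)_j \ge (1-\gamma)\vv_j > 0$. Multiplying by $\x_j > 0$ gives $\x_j((I-\gamma P_\sigma)(I-\gamma P_\tau)^{-1}\x)_j \ge (1-\gamma)\x_j\vv_j = (1-\gamma)|\x_j\vv_j| > 0$.

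I do not expect a genuine obstacle: the argument is elementary once the substitution $\x = (I-\gamma P_\tau)\vv$ is in place, and everything reduces to the observation that a stochastic matrix cannot increase the maximum absolute coordinate. The only points needing mild care are the sign bookkeeping in (\ref{eq:Ax-positive}) — arguing that $\x_j$ and $\vv_j$ have the same sign so the bound can legitimately be phrased with $|\x_j\vv_j|$ — and making explicit that the relevant $j$-th coordinate collapses to $\vv_j - \gamma(P_\sigma\vv)_j$ instead of something involving the full inverse $(I-\gamma P_\tau)^{-1}$.
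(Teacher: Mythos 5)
Your proof is correct and follows essentially the same route as the paper's: rewrite $\x=(I-\gamma P_\tau)\vv$, use row-stochasticity to bound $|(P\vv)_i|$ by $|\vv_j|$, apply the triangle and reverse triangle inequalities, and handle the sign of $\x_j$ relative to $\vv_j$ for (\ref{eq:Ax-positive}). Your WLOG normalization $\vv_j>0$ is just a cosmetic variant of the paper's ``same sign'' argument, so there is nothing further to add.
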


\begin{proof}
Observe first that $\vv$ is the unique solution to $\vv = \x + \gamma
P_\tau \vv$. In fact, we can interpret $\vv$ as the value
vector for $\tau$ when the costs $\cc_\tau$ have been replaced by
$\x$. If $v=\0$ then this implies that $\0=\x+\0 \ne \0$ which is a
contradiction. Thus, $\vv \ne \0$ and in particular $\vv_j \ne 0$.
Since, for every $i$, the entries of $(P_\tau)_i$ are non-negative and
sum to one we have that $\abs{\gamma (P_\tau)_i \vv} \le \gamma
\abs{\vv_j}$. The equations $\vv_i =
\x_i+\gamma (P_\tau)_i \vv$, for all $i$, then imply that:
\begin{align*}
\abs{\x_j} ~&=~ \abs{\vv_j - \gamma (P_\tau)_j \vv} ~\ge~
\abs{\vv_j} - \abs{\gamma (P_\tau)_j \vv} ~\ge~
\abs{\vv_j} - \gamma\abs{\vv_j} ~=~ (1-\gamma)\abs{\vv_j} 
\enspace , \enspace \text{and}
\\
\forall i: ~ \abs{\x_i} ~&=~ \abs{\vv_i - \gamma (P_\tau)_i \vv} ~\le~
\abs{\vv_i} + \abs{\gamma (P_\tau)_i \vv} ~\le~
\abs{\vv_j} + \gamma\abs{\vv_j} ~=~ (1+\gamma)\abs{\vv_j} 
\enspace .
\end{align*}
This proves (\ref{eq:xj-size}) and (\ref{eq:xi-size}).

We next observe that $\vv_j$ and $\x_j$ have the same sign. This again
follows from $\abs{\gamma (P_\tau)_j \vv} \le \gamma \abs{\vv_j}$ and
$\vv_j = \x_j+\gamma (P_\tau)_j \vv$ .
Using that $\vv_j$ and $\x_j$ have the same sign we see that:
\begin{align*}
\x_j((I-\gamma P_{\sigma}) (I-\gamma P_{\tau})^{-1} \x)_j &~=~
\x_j((I-\gamma P_\sigma)\vv)_j ~=~ \x_j \vv_j-\gamma
\x_j(P_\sigma)_j \vv \\
&~\geq~ \x_j \vv_j-\gamma \x_j \vv_j 
~=~ (1-\gamma) \x_j \vv_j ~>~0 \enspace .
\end{align*}
This proves~(\ref{eq:Ax-positive}).
\end{proof}

We know from Lemma \ref{lem:alternative} that the matrix
$M_{G,\sigma,\tau}$ is a $P$-matrix if and only if for every $\x \neq
\0$ there exists a $j \in [n]$ such that $\x_j (M_{G,\sigma,\tau} \x)_j > 0$. 
Since $\I \x \ne \0$, inequality (\ref{eq:Ax-positive}) in Lemma
\ref{lem:xj} shows that $\x_j (M_{G,\sigma,\tau} \x)_j > 0$ for $j \in
\argmax_i \abs {((I-\gamma P_\tau)^{-1}\I\x)_i}$. Hence,
$M_{G,\sigma,\tau}$ is a $P$-matrix.

The following theorem summarizes the main result of this section.

\begin{theorem}
Let $G$ be a 2TBSG, and let $\sigma$ and $\tau$ be two disjoint
strategy profiles that form a partition of the set of actions of $G$.
Then the optimal value vector for $G$ is
$\vv^* = (I-\gamma P_{\tau})^{-1}(\cc_{\tau}+\I\z)$, where $(\w,\z)$
is a solution to the LCP
$(M_{G,\sigma,\tau},\q_{G,\sigma,\tau})$. Furthermore, $M_{G,\sigma,\tau}$ is
a $P$-matrix.
\end{theorem}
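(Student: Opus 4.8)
The plan is essentially to collect and repackage facts already established in this section, since the theorem is a summary statement. First I would recall that, by the chain of equivalences carried out in Subsection~\ref{sec:2TBSG2LCP}, the original system \eqref{eq:lcp1}--\eqref{eq:lcp4} is equivalent to the pair of conditions $\y = (I-\gamma P_{\tau})^{-1}(\cc_{\tau}+\I\z)$ together with the LCP $(M_{G,\sigma,\tau},\q_{G,\sigma,\tau})$: this is exactly the derivation that eliminates $\y$ using Lemma~\ref{lem:converges} (nonsingularity of $I-\gamma P_{\tau}$) and arrives at \eqref{eq:complicated}, which is $\w = \q_{G,\sigma,\tau} + M_{G,\sigma,\tau}\z$ by Definition~\ref{def:solves}.

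Next I would invoke Lemma~\ref{lem:solve}, which tells us that the system \eqref{eq:lcp1}--\eqref{eq:lcp4} has a unique solution $(\w,\y,\z)$ and that in this solution $\y$ is the optimal value vector $\vv^*$ of $G$. Composing this with the equivalence above: given any solution $(\w,\z)$ of the LCP $(M_{G,\sigma,\tau},\q_{G,\sigma,\tau})$, setting $\y := (I-\gamma P_{\tau})^{-1}(\cc_{\tau}+\I\z)$ produces a solution of \eqref{eq:lcp1}--\eqref{eq:lcp4}, and hence this $\y$ equals $\vv^*$. This gives the first assertion of the theorem, namely $\vv^* = (I-\gamma P_{\tau})^{-1}(\cc_{\tau}+\I\z)$.

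For the second assertion, that $M_{G,\sigma,\tau}$ is a $P$-matrix, I would use the characterization in Lemma~\ref{lem:alternative}: it suffices to show that for every nonzero $\x \in \R^n$ there is an index $j$ with $\x_j(M_{G,\sigma,\tau}\x)_j > 0$. Given such an $\x$, note $\I\x \neq \0$ since $\I$ is a diagonal $\pm1$ matrix, hence invertible. Apply Lemma~\ref{lem:xj} with the vector $\I\x$ in place of $\x$ there (so $\vv = (I-\gamma P_\tau)^{-1}\I\x$), and take $j \in \argmax_i |\vv_i|$. Inequality \eqref{eq:Ax-positive} of Lemma~\ref{lem:xj} gives $(\I\x)_j\big((I-\gamma P_{\sigma})(I-\gamma P_{\tau})^{-1}\I\x\big)_j > 0$. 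Since $\I$ is diagonal with entries $\pm1$, multiplying the $j$-th coordinate on the outside by $\I_{jj}$ and noting $(\I\x)_j = \I_{jj}\x_j$ and $\I_{jj}^2 = 1$, we get $\x_j(M_{G,\sigma,\tau}\x)_j = \I_{jj}\x_j \cdot \I_{jj}(\ldots)_j = (\I\x)_j(\ldots)_j > 0$, as desired.

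I do not anticipate a genuine obstacle here, since all the heavy lifting is done by Lemmas~\ref{lem:solve} and~\ref{lem:xj}; the only mildly delicate point is bookkeeping the $\I$-conjugation correctly when passing between $(I-\gamma P_{\sigma})(I-\gamma P_{\tau})^{-1}$ and $M_{G,\sigma,\tau} = \I(I-\gamma P_{\sigma})(I-\gamma P_{\tau})^{-1}\I$, i.e., checking that the sign $\I_{jj}$ picked up on each side cancels. That is exactly the observation already recorded right after Lemma~\ref{lem:xj} in the excerpt, so the proof is really just a one-paragraph assembly of the pieces.
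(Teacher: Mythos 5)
Your proposal is correct and follows essentially the same route as the paper: the paper likewise obtains the first claim by eliminating $\y$ via Lemma~\ref{lem:converges} and invoking Lemma~\ref{lem:solve}, and establishes the $P$-matrix claim by combining Lemma~\ref{lem:alternative} with inequality~(\ref{eq:Ax-positive}) of Lemma~\ref{lem:xj} applied to $\I\x$, with the $\I$-conjugation cancelling exactly as you describe. No gaps.
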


Recall that Kojima {\em et al.}~\cite{Koj91} showed that every
$P$-matrix is a $P_*$-matrix. Hence, we have shown that
$M_{G,\sigma,\tau}$ is a $P_*$-matrix.

\section{The $P_*(\kappa)$ property for 2TBSGs}\label{sec:counter}

Let $G$ be a 2TBSG with matrix representation $(P,\cc,J,\I,\gamma)$,
and let $\sigma$ and $\tau$ be two disjoint strategy profiles that
form a partition of the set of actions of $G$. Recall that $G$ can be
solved by solving the LCP $(M_{G,\sigma,\tau},\q_{G,\sigma,\tau})$. In
this section we provide essentially tight upper and
lower bounds on the smallest number $\kappa$ for which the matrix
$M_{G,\sigma,\tau}$ is guaranteed to be a $P_*(\kappa)$-matrix. More
precisely, we first show that for $\kappa=\frac{n}{(1-\gamma)^2}$, the matrix $M_{G,\sigma,\tau}$ is always a
$P_*(\kappa)$-matrix. We then show that for every
$n > 2$ and $\gamma < 1$ there exists a game $G_{n}$, and two
strategy profiles $\sigma_n$ and $\tau_n$, such that
$M_{G_{n},\sigma_n,\tau_n}$ is not a $P_*(\kappa)$-matrix for any
$\kappa < \frac{\gamma^2 (n-2)}{8(1-\gamma)^2}-\frac{1}{4}$.
It follows that the unified interior point method
of Kojima {\em et al.}~\cite{Koj91} solves the 2TBSG $G$ in time
$O(\frac{n^{4.5} L}{(1-\gamma)^2})$, where $L$ is the number of bits
required to describe $G$, and that this bound cannot be improved
further by bounding $\kappa$.

Recall that $M_{G,\sigma,\tau} = \I(I-\gamma P_{\sigma}) (I-\gamma
P_{\tau})^{-1}\I$, and define $M := \I M_{G,\sigma,\tau} \I = (I-\gamma
P_{\sigma}) (I-\gamma P_{\tau})^{-1}$. It is easy to see that
$M_{G,\sigma,\tau}$ is a $P_*(\kappa)$-matrix for some $\kappa \ge 0$
if and only if $M$ is. Indeed, the inequality of Definition
\ref{def:kappa} must hold for all $\x \in \R^n$, and we can therefore
substitute $\x$ by $\I\x$. Hence, for the remainder of this section we
will bound the $\kappa$ for which $M$ is a $P_*(\kappa)$-matrix.

\begin{theorem}\label{thm:kappa_upper_bound}
Let $G$ be a 2TBSG with $n$ states and discount factor $\gamma$, where $0<\gamma<1$. Furthermore, let $\sigma$ and $\tau$ be two strategy profiles that partition the set of actions of $G$. Then the matrix $M_{G,\sigma,\tau}$ is a $P_*(\kappa)$-matrix for
$\kappa=\frac{n}{(1-\gamma)^2}$.
\end{theorem}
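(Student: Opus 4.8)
The plan is to work with the equivalent matrix $M = (I-\gamma P_\sigma)(I-\gamma P_\tau)^{-1}$ and to verify the defining inequality of Definition~\ref{def:kappa} directly: for an arbitrary nonzero $\x \in \R^n$ I must show
\[
\sum_{i \in \delta_-(M)} \x_i (M\x)_i ~+~ (1+4\kappa)\sum_{i \in \delta_+(M)} \x_i (M\x)_i ~\ge~ 0
\]
for $\kappa = n/(1-\gamma)^2$. Rewriting this, it suffices to show that
\[
\x^\T (M\x) ~+~ 4\kappa \sum_{i \in \delta_+(M)} \x_i (M\x)_i ~\ge~ 0 .
\]
Since the positive term is a sum of $|\delta_+(M)| \le n$ quantities, it is enough to produce a \emph{single} index $j$ with $\x_j(M\x)_j$ large and positive, namely large enough that
\[
4\kappa \cdot \x_j (M\x)_j ~\ge~ -\x^\T (M\x),
\]
and then drop the remaining nonnegative terms of the $\delta_+$ sum. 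This reduces the theorem to two estimates: a lower bound on $\max_i \x_i(M\x)_i$ and an upper bound on $-\x^\T(M\x)$ (equivalently an upper bound on $|\x^\T(M\x)|$).

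The lower bound is exactly what Lemma~\ref{lem:xj} supplies. Setting $\vv = (I-\gamma P_\tau)^{-1}\x$ and choosing $j \in \argmax_i |\vv_i|$, inequality~(\ref{eq:Ax-positive}) gives $\x_j(M\x)_j \ge (1-\gamma)|\x_j \vv_j|$, and combining with~(\ref{eq:xj-size}), which says $|\x_j| \ge (1-\gamma)|\vv_j|$, I get
\[
\x_j (M\x)_j ~\ge~ (1-\gamma)\,|\x_j|\,|\vv_j| ~\ge~ (1-\gamma)^2 \,\vv_j^2 .
\]
For the upper bound on $|\x^\T(M\x)| = |\x^\T(I-\gamma P_\sigma)\vv|$, I expand coordinatewise: $\x^\T(I-\gamma P_\sigma)\vv = \sum_i \x_i\big(\vv_i - \gamma(P_\sigma)_i\vv\big)$. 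Each factor $|\vv_i - \gamma(P_\sigma)_i\vv|$ is at most $(1+\gamma)|\vv_j|$ by the triangle inequality (using that rows of $P_\sigma$ are probability vectors and $|\vv_i| \le |\vv_j|$), and each $|\x_i| \le (1+\gamma)|\vv_j|$ by~(\ref{eq:xi-size}). Hence every one of the $n$ terms is bounded by $(1+\gamma)^2 \vv_j^2 \le 4\vv_j^2$, giving $|\x^\T(M\x)| \le 4n\,\vv_j^2$.

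Putting the two bounds together, $4\kappa\,\x_j(M\x)_j \ge 4\kappa(1-\gamma)^2\vv_j^2$, and this is at least $4n\,\vv_j^2 \ge |\x^\T(M\x)|$ precisely when $\kappa \ge n/(1-\gamma)^2$, as claimed (note $\vv_j \ne 0$ since $\x \ne \0$, so the inequalities are meaningful). Finally I transfer the conclusion back from $M$ to $M_{G,\sigma,\tau} = \I M \I$ using the substitution $\x \mapsto \I\x$ already noted in the text. The only mildly delicate point is bookkeeping in the $P_*(\kappa)$ inequality — making sure that after isolating the single index $j$ the leftover terms split correctly into the $\delta_-$ part (kept with coefficient $1$ inside $\x^\T(M\x)$) and the remaining $\delta_+$ part (which is nonnegative and simply discarded); there is no real analytic obstacle, as all the work is done by Lemma~\ref{lem:xj}.
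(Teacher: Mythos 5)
Your proposal is correct and follows essentially the same route as the paper's proof: reduce to $M=(I-\gamma P_\sigma)(I-\gamma P_\tau)^{-1}$ via $\x\mapsto\I\x$, apply Lemma~\ref{lem:xj} at $j\in\argmax_i\abs{\vv_i}$ to get $\x_j(M\x)_j\ge(1-\gamma)^2\vv_j^2$, and bound the opposing terms by $4n\vv_j^2$ using the coordinatewise estimates. The only cosmetic difference is that you bound $\abs{\x^\T M\x}$ as a whole and keep the single index $j$ in the $4\kappa$ term, whereas the paper bounds the $\delta_-$ sum and the full $\delta_+$ sum separately; the estimates and the resulting threshold $\kappa\ge n/(1-\gamma)^2$ are the same.
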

\begin{proof}
As discussed above we may prove the theorem by bounding $\kappa$ for
$M = (I-\gamma
P_{\sigma}) (I-\gamma P_{\tau})^{-1}$.
We thus need to find a number $\kappa$, such that 
\[\forall \x \in \R^n:
\sum_{i\in \delta_-(M)} \!\x_i ((I-\gamma P_\sigma)(I-\gamma P_\tau)^{-1}
\x)_i~+~(1+4\kappa)\!\sum_{i\in \delta_+(M)}\! \x_i ( (I-\gamma
P_\sigma)(I-\gamma P_\tau)^{-1} \x)_i ~\geq~ 0 \enspace ,
\]  
where $\delta_-(M) = \{i \in [n] \mid \x_i(M\x)_i<0\}$ and
$\delta_+(M) = \{i \in [n] \mid \x_i(M\x)_i>0\}$.

Let $\x$ be any non-zero vector (the expression is trivially satisfied for $\x=0$), $\vv=(I-\gamma P_\tau)^{-1}\x$, and $j\in \argmax_i \abs {\vv_i}$.
To prove the lemma we estimate $\sum_{i\in \delta_-(M)} \x_i ((I-\gamma P_\sigma)\vv)_i$ and $\sum_{i\in \delta_+(M)} \x_i ((I-\gamma P_\sigma)\vv)_i$ separately. 

Using Lemma \ref{lem:xj} we see that:
\[
\forall i: ~ \abs {\x_i((I-\gamma P_\sigma)\vv)_i} ~\leq~  \abs {\x_i} (\abs
        {\vv_j}+\gamma \abs {\vv_j}) ~\leq~ (1+\gamma)^2 {\abs
          {\vv_j}}^2 ~<~ 4 {\abs
          {\vv_j}}^2 \enspace ,
\] 
which implies that:
\[\abs {\sum_{i\in \delta_-(M)} \x_i ((I-\gamma P_\sigma)\vv)_i} ~<~ 4n {\abs {\vv_j}}^2 \enspace .\]

Similarly, from Lemma \ref{lem:xj} we have that: 
\[ 
\x_j(M \x)_j ~\geq~  (1-\gamma)\x_j \vv_j  ~=~ (1-\gamma)\abs
  {\x_j}\abs{\vv_j} ~\geq~ (1-\gamma)^2 {\abs {\vv_j}}^2 \enspace ,
\]
which implies that:
\[
\sum_{i\in \delta_+(M)} \x_i ((I-\gamma P_\sigma)\vv)_i ~\geq~
(1-\gamma)^2 {\abs {\vv_j}}^2 \enspace .
\]

We conclude that:
\[
\sum_{i\in \delta_-(M)} \x_i ((I-\gamma
P_\sigma)\vv)_i ~+~(1+4\kappa)\!\!\sum_{i\in \delta_+(M)}
\x_i ((I-\gamma P_\sigma)\vv)_i ~>~ -4n
  {\abs {\vv_j}}^2+(1+4\kappa)(1-\gamma)^2 {\abs {\vv_j}}^2 \enspace .
\]
It follows that $M$ is a $P_*(\kappa)$-matrix when: 
\[
-4n {\abs {\vv_j}}^2+(1+4\kappa)(1-\gamma)^2 {\abs {\vv_j}}^2 ~\ge~ 0
\quad \iff
\]
\[
4\kappa(1-\gamma)^2 {\abs {\vv_j}}^2 ~\ge~ (4n-(1-\gamma)^2) {\abs {\vv_j}}^2
\quad \iff
\]
\[
\kappa ~\ge~ \frac{n}{(1-\gamma)^2} - \frac{1}{4} \enspace.
\]
\end{proof}

We next present a lower bound that matches the upper bound
given in Theorem~\ref{thm:kappa_upper_bound}. 
We establish the lower bound using the family of games $\{G_n \mid
n>2\}$ shown in Figure \ref{fig:high kappa}.
The game $G_n$ consists of $n$ states that all belong to Player 2. Every state has two actions, and all actions are deterministic. 
The actions from states $1$ and $2$ form self-loops.
The actions from state $i$, for $i > 2$, lead to state $1$ and $2$, respectively. 
Both actions from state $1$ have cost $1$, both actions from state $2$ have cost $-1$, and all the remaining actions have cost $a\in\R$, where $a$ will be specified in the analysis. The discount factor $\gamma < 1$ is given along with $n$.

\begin{figure}
\begin{centering}
\begin{tikzpicture}[->,>=stealth',shorten >=1pt,node distance=5.5cm*0.5,
                    semithick,scale=0.2 ]
       \colorlet{darkgray}{black!75}
\tikzstyle{every state}=[fill=white,draw=black,text=black,font=\small , inner sep=-0.05cm]
\tikzstyle{cost}=[state,regular polygon,regular polygon sides=4,regular polygon rotate=45]

\node [state] (s3){$3$};
\node [state] (s4)[right of=s3]{$4$};
\node [state,draw=white] (dots)[right of=s4]{$\dots$};
\node [state] (sn1)[right of=dots]{$n-1$};
\node [state] (sn)[right of=sn1]{$n$};

\node [cost] (c1)[above left =1.6cm and 0.7cm of dots]{$a$};
\node [cost] (c2)[above right =1.6cm and 0.7cm of dots]{$a$};
\node [state,left of=c1] (s1) {1};
\node [state,right of=c2] (s2) {2};

\tikzstyle{se}=[-,shorten >=0pt];

\foreach \x/\y in {s1/1,s2/-1} {
\node [cost,above left =0.7cm and 0.7cm of \x] (\x 1) {\y};
\node [cost,above right =0.7cm and 0.7cm of \x] (\x 2) {\y};
\path (\x) edge[se,out=165,in=-90] (\x 1.south)
(\x) edge[se,out=15,in=-90,dashed] (\x 2.south)
(\x 1) edge[out=0,in=100] (\x)
(\x 2) edge[out=180,in=80,dashed] (\x);
}

\path (c1) edge (s1)
(c2) edge[dashed] (s2);
\foreach \x in {s3,s4,sn1,sn} {
\path (\x) edge[se] (c1)
(\x) edge[se,dashed] (c2);
}
\end{tikzpicture}

\end{centering}
\caption{The game $G_n$ and two strategy profiles $\sigma_n$
  (solid) and $\tau_n$ (dashed).}
\label{fig:high kappa}
\end{figure}

We also define two strategy profiles $\sigma_n$ and $\tau_n$ that partition the set of actions. Since all states belong to Player 2, $\sigma_n$ and $\tau_n$ are simply strategies for Player 2.
The strategies $\sigma_n$ and $\tau_n$ are represented in Figure \ref{fig:high
  kappa} by solid and dashed arrows, respectively. The strategies
$\sigma_n$ and $\tau_n$ both contain self-loops at states $1$ and $2$. Furthermore, for every state $i > 2$, the strategy $\sigma_n$ contains the action leading to state $1$, and the strategy $\tau_n$ contains the
action leading to state $2$. Observe that since all states belong to Player 2 we have $\I = I$ such that:
\[
M_{G_n,\sigma_n,\tau_n} = (I-\gamma P_{\sigma_n}) (I-\gamma
P_{\tau_n})^{-1} \enspace.
\]
Note also that $M_{G_n,\sigma_n,\tau_n}$ is independent of the parameter $a$.

Since all our lower bounds use $G_n$ and many of the calculations are similar, we first establish the following lemma concerning various properties of the game.

\begin{lemma}\label{lem:simple calc}
Let $n$ be given and consider $G_n$.
Let $\vv := \vv^{\tau_n} = (I-\gamma P_{\tau_n})^{-1}\cc_{\tau_n}$ be the value vector for $\tau_n$ in the game $G_n$. Define $\rr := M_{G_n,\sigma_n,\tau_n}\cc_{\tau_n}$, and let $\rr' \in \R^n$ satisfy $\rr'_i = (\cc_{\tau_n})_i \rr_i$. Then,
\begin{align*}
(\cc_{\tau_n})_i ~=~ \begin{cases}
1 & \text{if $i=1$} \\
-1 & \text{if $i=2$} \\
a & \text{if $i > 2$}
\end{cases}
&\hspace{1cm}
\vv_i ~=~ \begin{cases}
\frac{1}{1-\gamma} & \text{if $i=1$} \\
\frac{-1}{1-\gamma} & \text{if $i=2$} \\
a - \frac{\gamma}{1-\gamma} & \text{if $i > 2$}
\end{cases}\\
\rr_i ~=~ \begin{cases}
1 & \text{if $i=1$} \\
-1 & \text{if $i=2$} \\
a - \frac{2\gamma}{1-\gamma} & \text{if $i > 2$} 
\end{cases}
&\hspace{1cm}\rr'_i ~=~ \begin{cases}
1 & \text{if $i=1$} \\
1 & \text{if $i=2$} \\
a^2 - \frac{2\gamma a}{1-\gamma} & \text{if $i > 2$} \enspace .
\end{cases}
\end{align*}
\end{lemma}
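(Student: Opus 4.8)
The plan is to verify all four formulas by direct computation from the description of $G_n$ and the two strategy profiles. First I would record the transition matrices explicitly. Reading Figure~\ref{fig:high kappa}: under both $\sigma_n$ and $\tau_n$ the states $1$ and $2$ follow their self-loops, so rows $1$ and $2$ of both $P_{\sigma_n}$ and $P_{\tau_n}$ equal $\ee_1^\T$ and $\ee_2^\T$; for every $i>2$, the row of $P_{\sigma_n}$ is $\ee_1^\T$ (the action going to state $1$) and the row of $P_{\tau_n}$ is $\ee_2^\T$ (the action going to state $2$). The cost vector $\cc_{\tau_n}$ is read directly off the diamonds in the figure, which gives the first displayed formula.

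Next I would compute $\vv = \vv^{\tau_n}$ using the fixed-point characterisation (\ref{eq:one_step}), i.e. $\vv = \cc_{\tau_n} + \gamma P_{\tau_n}\vv$. Because $P_{\tau_n}$ has the simple form above, this system decouples: the equations for states $1$ and $2$ read $\vv_1 = 1 + \gamma\vv_1$ and $\vv_2 = -1 + \gamma\vv_2$, yielding $\vv_1 = \frac{1}{1-\gamma}$ and $\vv_2 = \frac{-1}{1-\gamma}$; and for $i>2$ the equation $\vv_i = a + \gamma\vv_2$ then gives $\vv_i = a - \frac{\gamma}{1-\gamma}$. Lemma~\ref{lem:converges} guarantees that $(I-\gamma P_{\tau_n})$ is non-singular, so this is the unique solution, and it matches the claimed expression for $\vv$.

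Then I would compute $\rr = M_{G_n,\sigma_n,\tau_n}\cc_{\tau_n} = (I-\gamma P_{\sigma_n})(I-\gamma P_{\tau_n})^{-1}\cc_{\tau_n} = (I-\gamma P_{\sigma_n})\vv$ componentwise using the row structure of $P_{\sigma_n}$: $\rr_1 = \vv_1 - \gamma\vv_1 = (1-\gamma)\vv_1 = 1$, $\rr_2 = (1-\gamma)\vv_2 = -1$, and for $i>2$, $\rr_i = \vv_i - \gamma\vv_1 = a - \frac{\gamma}{1-\gamma} - \frac{\gamma}{1-\gamma} = a - \frac{2\gamma}{1-\gamma}$. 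Finally $\rr'$ is just the entrywise product $\rr'_i = (\cc_{\tau_n})_i\rr_i$: for $i=1,2$ the entries are $1\cdot 1 = 1$ and $(-1)(-1) = 1$, and for $i>2$ it is $a\big(a - \frac{2\gamma}{1-\gamma}\big) = a^2 - \frac{2\gamma a}{1-\gamma}$, as claimed.

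There is no real obstacle here; the computation is elementary once the two transition matrices are written down. The only point requiring care is correctly reading the game from Figure~\ref{fig:high kappa} --- in particular remembering that $\sigma_n$ sends every state $i>2$ to state $1$ while $\tau_n$ sends it to state $2$, and that all states belong to Player~2, so that $\I = I$ and $M_{G_n,\sigma_n,\tau_n} = (I-\gamma P_{\sigma_n})(I-\gamma P_{\tau_n})^{-1}$, exactly as noted just before the statement of the lemma.
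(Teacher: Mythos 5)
Your computation is correct and follows essentially the same route as the paper: solve $\vv = \cc_{\tau_n} + \gamma P_{\tau_n}\vv$ componentwise, then evaluate $\rr = (I-\gamma P_{\sigma_n})\vv$ using the row structure of $P_{\sigma_n}$, and take the entrywise product for $\rr'$. The reading of the game and strategy profiles (self-loops at states $1,2$, state $i>2$ sent to $1$ under $\sigma_n$ and to $2$ under $\tau_n$, $\I = I$) matches the paper, so nothing is missing.
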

\begin{proof}
The calculation of $\vv^\tau$ follows straightforwardly from (\ref{eq:one_step}), i.e., $\vv = \cc_{\tau_n} + \gamma P_{\tau_n} \vv$, which can be stated as:
\begin{align*}
\vv_{1} ~&=~ 1 + \gamma \vv_{1} \\
\vv_{2} ~&=~ -1 + \gamma \vv_{2} \\
\forall i > 2: ~~ \vv_{i} ~&=~
a + \gamma \vv_{2}
\end{align*}

Let $\rr_i = (M_{G_n,\sigma_n,\tau_n}\cc_{\tau_n})_i = ((I-\gamma P_{\sigma_n})\vv)_i = \vv_i - \gamma(P_{\sigma_n}\vv)_i$ and observe that:
\begin{align*}
\rr_{1} ~&=~ \frac{1}{1-\gamma} - \gamma \frac{1}{1-\gamma} ~=~ 1 \\
\rr_{2} ~&=~ \frac{-1}{1-\gamma} + \gamma \frac{1}{1-\gamma} ~=~ -1 \\
\forall i > 2: ~~ \rr_i ~&=~
\left(a-\frac{\gamma}{1-\gamma}\right)
-\gamma  \frac{1}{1-\gamma} ~=~
a - \frac{2\gamma}{1-\gamma} 
\end{align*}

The value of $\rr_i' = (\cc_{\tau_n})_i \rr_i$ is computed straightforwardly as well.
\end{proof}

\begin{theorem}\label{thm:lower_kappa}
Let $n > 2$ and $0<  \gamma<1$ be given. The
matrix $M_{G_n,\sigma_n,\tau_n}$ is not a $P_*(\kappa)$-matrix for
$\kappa < \frac{\gamma^2 (n-2)}{8(1-\gamma)^2}-\frac{1}{4}=\Omega\bigl(\frac{\gamma^2 n}{(1-\gamma)^2}\bigr)$.
\end{theorem}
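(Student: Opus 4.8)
The plan is to exhibit a single vector $\x$ that violates the defining inequality of a $P_*(\kappa)$-matrix (Definition~\ref{def:kappa}) whenever $\kappa$ lies below the claimed threshold. Since all states of $G_n$ belong to Player~2 we have $\I = I$, so $M_{G_n,\sigma_n,\tau_n} = (I-\gamma P_{\sigma_n})(I-\gamma P_{\tau_n})^{-1}$, and the natural move is to reuse the computations already carried out in Lemma~\ref{lem:simple calc} by taking $\x := \cc_{\tau_n}$. Then $M_{G_n,\sigma_n,\tau_n}\x = \rr$ and $\x_i (M_{G_n,\sigma_n,\tau_n}\x)_i = (\cc_{\tau_n})_i\rr_i = \rr'_i$, which by Lemma~\ref{lem:simple calc} equals $1$ for $i \in \{1,2\}$ and $a^2 - \frac{2\gamma a}{1-\gamma}$ for $i > 2$.

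The second step is to pick the value of the free parameter $a$ that makes these $n-2$ ``heavy'' coordinates as negative as possible. The expression $a^2 - \frac{2\gamma a}{1-\gamma}$ is a convex quadratic in $a$, minimized at $a = \frac{\gamma}{1-\gamma}$, where it takes the value $-\frac{\gamma^2}{(1-\gamma)^2}$. Fixing this $a$, we obtain $\x_i(M_{G_n,\sigma_n,\tau_n}\x)_i = 1 > 0$ for $i \in \{1,2\}$ and $\x_i(M_{G_n,\sigma_n,\tau_n}\x)_i = -\frac{\gamma^2}{(1-\gamma)^2} < 0$ for $i > 2$ (here we use $0 < \gamma < 1$), so that $\delta_+(M_{G_n,\sigma_n,\tau_n}) = \{1,2\}$ and $\delta_-(M_{G_n,\sigma_n,\tau_n}) = \{3,\dots,n\}$.

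Finally I would substitute into the inequality of Definition~\ref{def:kappa}: its left-hand side becomes $-\frac{(n-2)\gamma^2}{(1-\gamma)^2} + 2(1+4\kappa)$, and demanding that this be nonnegative is equivalent to $\kappa \ge \frac{\gamma^2(n-2)}{8(1-\gamma)^2} - \frac14$. Hence the inequality is violated — and $M_{G_n,\sigma_n,\tau_n}$ fails to be a $P_*(\kappa)$-matrix — for every $\kappa$ strictly below this value, which is $\Omega\bigl(\frac{\gamma^2 n}{(1-\gamma)^2}\bigr)$ since $n > 2$. There is no genuine obstacle in this argument; the only ``idea'' is the choice $\x = \cc_{\tau_n}$, so that $M_{G_n,\sigma_n,\tau_n}\x$ is already known from Lemma~\ref{lem:simple calc}, together with the one-variable minimization over $a$, and everything else is the arithmetic sketched above. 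A minor point worth verifying while writing is that Lemma~\ref{lem:simple calc} is phrased directly in terms of $M_{G_n,\sigma_n,\tau_n}$ (not its $\I$-conjugate), so no additional bookkeeping with $\I$ is needed.
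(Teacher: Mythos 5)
Your proposal is correct and follows essentially the same route as the paper's own proof: the same test vector $\x = \cc_{\tau_n}$, the same use of Lemma~\ref{lem:simple calc} to read off $\x_i(M\x)_i$, the same minimizing choice $a = \frac{\gamma}{1-\gamma}$, and the same arithmetic leading to the threshold $\frac{\gamma^2(n-2)}{8(1-\gamma)^2}-\frac{1}{4}$. Your side remark about $\I = I$ is also consistent with the paper, so no further bookkeeping is needed.
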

\begin{proof}
To simplify notation we denote $M_{G_n,\sigma_n,\tau_n}$ by $M$.
Recall that $M$ is a $P_*(\kappa)$-matrix if and only if
\[\forall \x: \sum_{i\in \delta_-(M)} \!\!\x_i ((I-\gamma P_{\sigma_n})(I-\gamma P_{\tau_n})^{-1} \x)_i+(1+4\kappa)\!\!\sum_{i\in \delta_+(M)}\!\! \x_i ((I-\gamma P_{\sigma_n})(I-\gamma P_{\tau_n})^{-1} \x)_i ~\geq~ 0
\] 
where $\delta_-(M) = \{i \in [n] \mid \x_i(M\x)_i<0\}$ and
$\delta_+(M) = \{i \in [n] \mid \x_i(M\x)_i>0\}$.
We show that the inequality is violated when $\kappa < \frac{\gamma^2 (n-2)}{8(1-\gamma)^2}-\frac{1}{4}$ and $\x = \cc_{\tau_n}$ for an appropriate choice of $a$.

Let $\rr'_i = \x_i (M\x)_i$.
From Lemma~\ref{lem:simple calc} we have that \[
\rr'_i ~=~ \begin{cases}
1 & \text{if $i=1$} \\
1 & \text{if $i=2$} \\
a^2 - \frac{2\gamma a}{1-\gamma} & \text{if $i > 2$} \enspace .
\end{cases}
\]

We choose $a = \frac{\gamma}{1-\gamma}$ in order to minimize $\rr_i'$ for $i > 2$, which gives $\rr_i' = -(\frac{\gamma}{1-\gamma})^2$.
It follows that: \[\sum_{i\in \delta_-(M)} \x_i (M\x)_i ~=~ -(n-2)\left(\frac{\gamma}{1-\gamma}\right)^2 \quad\quad \text{and}\quad\quad \sum_{i\in \delta_+(M)} \x_i (M\x)_i ~=~ 2 \enspace .\]
Hence, we get:
\begin{align*}
\sum_{i\in \delta_-(M)} \x_i (M\x)_i\,+\,(1+4\kappa)\!\!\sum_{i\in \delta_+(M)} \x_i (M\x)_i &~\geq~ 0 \quad \iff \\
2(1+4\kappa) &~\ge~ (n-2) \left(\frac{\gamma}{1-\gamma}\right)^2
\quad \iff \\
\kappa &~\ge~ \frac{n-2}{8} \left(\frac{\gamma}{1-\gamma}\right)^2-\frac{1}{4} \enspace .
\end{align*}
\end{proof}

\section{Bounds for the potential reduction algorithm\label{sec:ye_alg}}

The interior point potential reduction algorithm of Kojima {\em et
  al.} \cite{KoMeYe92} for solving a $P$-matrix LCP $(M,\q)$
takes as input a parameter $\epsilon > 0$ and produces a feasible
solution $(\w,\z)$ for which $\w^\T\z<\epsilon$. Following
Ye~\cite{Ye98}, the running time of the potential reduction algorithm
is upper bounded by $O(\frac{-\delta}{\theta} n^{4} \log
\epsilon^{-1})$, where $\delta$ is the smallest eigenvalue of
$\frac{M+M^\T}{2}$, and $\theta = \theta(M)$ is the positive $P$-matrix
number of $M$ (Definition \ref{def:positive_P_matrix}). 
In this section we bound the running time of the potential reduction algorithm when
applied to 2TBSGs by studying the two quantities $\delta$ and
$\theta$.

Throughout the section we let $G$ be a 2TBSG with matrix
representation $(P,\cc,J,\I,\gamma)$, 
and $\sigma$ and $\tau$ be two disjoint strategy profiles that
form a partition of the set of actions of $G$. To simplify notation we
let $M := M_{G,\sigma,\tau}$.
We study the smallest eigenvalue $\delta$ of
$\frac{M+M^\T}{2}$ in Section \ref{sec:eigenvalue}, and the positive
$P$-matrix number $\theta(M)$ in Section \ref{sec:positive_p_matrix}. 
For both quantities we provide upper and lower bounds that are
essentially tight. 

Let $\delta_n$ be the smallest eigenvalue of the matrix $\frac{1}{2}(M_{G_n,\sigma_n,\tau_n}+M_{G_n,\sigma_n,\tau_n}^\T)$, where $M_{G_n,\sigma_n,\tau_n}$ is derived from the game in Figure \ref{fig:high kappa}. To be precise, we prove the two bounds for $\delta$ and $\delta_n$:
\begin{align*}
\delta &~>~ -\frac{(1+\gamma)\sqrt n }{1-\gamma} \\
\delta_n &~\le~ 1-\frac{\gamma \sqrt {(n-2)}}{\sqrt{2}(1-\gamma)}
\end{align*}

For the positive $P$-matrix number we prove the two bounds:
\begin{align*}
\theta(M) &~\ge~ \frac{(1-\gamma)^2}{(1+\gamma)^2 n} \\
\theta(M_{G_n,\sigma_n,\tau_n}) &~<~
\frac{(1-\gamma)^2}{(2\gamma)^2 (n-2)} \enspace.
\end{align*}

Note that the upper bounds for the smallest
eigenvalue $\delta_n$ and the positive $P$-matrix
number $\theta(M_{G_n,\sigma_n,\tau_n})$ are obtained using the same
matrix $M_{G_n,\sigma_n,\tau_n}$, which was also used in the proof of Theorem \ref{thm:lower_kappa}.
Hence, for the game $G_n$ we achieve the worst-case ratio of $\frac{-\delta}{\theta} = \Omega(\frac{\gamma  n^{3/2}}{(1-\gamma)^3})$.

\subsection{Bounds for the smallest eigenvalue}\label{sec:eigenvalue}

We first lower bound the smallest eigenvalue of $\frac{M+M^\T}{2}$,
where $M = \I(I-\gamma P_{\sigma}) (I-\gamma
P_{\tau})^{-1}\I$.
We let $\R^n_{\norm{\cdot}_\ell=1}$ be the set of vectors in $\R^n$ such that each vector $\vv\in \R^n_{\norm{\cdot}_\ell=1}$ has $\norm{\vv}_\ell=1$.

\begin{theorem}
The matrix $\frac{M+M^\T}{2}$ has smallest eigenvalue greater than $-\frac{(1+\gamma)\sqrt n }{1-\gamma}=-O(\frac{\sqrt n }{1-\gamma})$ 
\end{theorem}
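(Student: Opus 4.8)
The plan is to bound the smallest eigenvalue of $\frac{M+M^\T}{2}$ via the Rayleigh quotient, i.e., to show that $\x^\T M \x > -\frac{(1+\gamma)\sqrt n}{1-\gamma}$ for every unit vector $\x$, since $\x^\T \frac{M+M^\T}{2} \x = \x^\T M \x$. As observed in Section~\ref{sec:counter}, it suffices to work with $M = (I-\gamma P_\sigma)(I-\gamma P_\tau)^{-1}$ rather than $M_{G,\sigma,\tau} = \I M \I$, because substituting $\x \mapsto \I\x$ only permutes signs of coordinates and leaves $\norm{\x}_2$ unchanged; so I would fix a unit vector $\x$ and estimate $\x^\T M \x = \x^\T (I-\gamma P_\sigma)\vv$ where $\vv = (I-\gamma P_\tau)^{-1}\x$.

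First I would invoke Lemma~\ref{lem:xj} with $j \in \argmax_i \abs{\vv_i}$: this gives $\abs{\x_i} \le (1+\gamma)\abs{\vv_j}$ for all $i$ and, as in the proof of Theorem~\ref{thm:kappa_upper_bound}, $\abs{((I-\gamma P_\sigma)\vv)_i} \le (1+\gamma)\abs{\vv_j}$ for all $i$, hence $\abs{\x_i ((I-\gamma P_\sigma)\vv)_i} \le (1+\gamma)^2 \abs{\vv_j}^2$. Summing over all $n$ coordinates yields $\abs{\x^\T M \x} \le n(1+\gamma)^2 \abs{\vv_j}^2$. The remaining task is to control $\abs{\vv_j}$ from above. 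Here I would use inequality~(\ref{eq:xj-size}) of Lemma~\ref{lem:xj}, which states $\abs{\x_j} \ge (1-\gamma)\abs{\vv_j}$, together with $\abs{\x_j} \le \norm{\x}_2 = 1$; this gives $\abs{\vv_j} \le \frac{1}{1-\gamma}$.

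Combining the two estimates gives $\x^\T M \x \ge -n(1+\gamma)^2 \abs{\vv_j}^2$, which is only of order $-n/(1-\gamma)^2$ — worse than the claimed bound. So the naive coordinate-wise sum loses a factor, and the main obstacle is to avoid bounding all $n$ terms by the same worst-case value $\abs{\vv_j}^2$. The fix is a Cauchy--Schwarz argument: write $\x^\T M \x = \x^\T (I-\gamma P_\sigma)\vv$ and bound it by $\norm{\x}_2 \cdot \norm{(I-\gamma P_\sigma)\vv}_2 \le 1 \cdot \norm{(I-\gamma P_\sigma)\vv}_2$. Now $\norm{(I-\gamma P_\sigma)\vv}_2 \le \norm{\vv}_2 + \gamma\norm{P_\sigma \vv}_2 \le (1+\gamma)\norm{\vv}_2$ since $P_\sigma$ is row-stochastic and hence has operator norm $1$ on... — actually $\norm{P_\sigma\vv}_\infty \le \norm{\vv}_\infty$, but for the $2$-norm one uses that the spectral radius considerations give $\norm{(I-\gamma P_\tau)^{-1}}_2$; more directly, I would bound $\norm{\vv}_2 \le \sqrt{n}\,\norm{\vv}_\infty = \sqrt n \abs{\vv_j} \le \frac{\sqrt n}{1-\gamma}$ using $\abs{\vv_j} \le \frac{1}{1-\gamma}$ from above. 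This yields $\abs{\x^\T M \x} \le (1+\gamma)\norm{\vv}_2 \le \frac{(1+\gamma)\sqrt n}{1-\gamma}$, and since $\x$ was an arbitrary unit vector the smallest eigenvalue of $\frac{M+M^\T}{2}$ is at least $-\frac{(1+\gamma)\sqrt n}{1-\gamma}$; the inequality is strict because~(\ref{eq:Ax-positive}) shows the diagonal contribution at coordinate $j$ is strictly positive, or alternatively because at least one Cauchy--Schwarz step is strict unless $\vv$ is constant, which is impossible for a genuine eigenvector situation. The delicate point to get right is the bound $\norm{P_\sigma \vv}_2 \le \norm{\vv}_2$ versus working through $\norm{\cdot}_\infty$; I would state it cleanly as $\norm{(I-\gamma P_\sigma)\vv}_2 \le (1+\gamma)\sqrt n \,\norm{\vv}_\infty$ and then apply $\norm{\vv}_\infty = \abs{\vv_j} \le \frac{1}{1-\gamma}\abs{\x_j} \le \frac{1}{1-\gamma}$.
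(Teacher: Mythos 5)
Your proof is correct, but it follows a genuinely different route from the paper's. You use the variational (Rayleigh quotient) characterization $\lambda_{\min}=\min_{\norm{\x}_2=1}\x^\T\frac{M+M^\T}{2}\x=\min_{\norm{\x}_2=1}\x^\T M\x$, which makes the transpose term disappear for free, and then you need only one estimate: Cauchy--Schwarz plus $\norm{(I-\gamma P_\sigma)\vv}_2\le\sqrt{n}\,(1+\gamma)\norm{\vv}_\infty$ and $\norm{\vv}_\infty\le\frac{1}{1-\gamma}$ (the latter being inequality~(\ref{eq:xj-size}) of Lemma~\ref{lem:xj}, or equivalently $\norm{(I-\gamma P_\tau)^{-1}}_\infty\le\frac{1}{1-\gamma}$). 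The paper instead starts from the eigenvalue equation for the symmetrized matrix, applies the triangle inequality, and bounds the two operator terms $\norm{(I-\gamma P_\sigma)(I-\gamma P_\tau)^{-1}\y}_2$ and $\norm{(I-\gamma P_\tau^\T)^{-1}(I-\gamma P_\sigma^\T)\y}_2$ separately, the first by passing through the $L^\infty$-norm and the second through the $L^1$-norm (using column-stochasticity of $P^\T$). Your argument is shorter and needs only the row-stochastic direction; the paper's bounds $\abs{\lambda}$ for every eigenvalue rather than just the smallest, and does not invoke Lemma~\ref{lem:xj} at all. Two small cautions: your mid-proof flirtation with $\norm{P_\sigma\vv}_2\le\norm{\vv}_2$ is indeed false in general for row-stochastic matrices (the spectral norm can exceed $1$), but your final route through the $L^\infty$-norm correctly avoids it; and your strictness remark should be tightened --- the clean way is to note that equality in the chain would force Cauchy--Schwarz to be tight, i.e.\ $(I-\gamma P_\sigma)\vv=\alpha\x$ with $\alpha\le 0$, making $\x_j((I-\gamma P_\sigma)\vv)_j=\alpha\x_j^2\le 0$ and contradicting~(\ref{eq:Ax-positive}) (your alternative ``unless $\vv$ is constant'' justification is not meaningful as stated). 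Note that the paper's own proof only derives the non-strict bound, so on this point you can actually do slightly better than the original.
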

\begin{proof}
Consider the equation $\lambda \x = \frac{M+M^\T}{2}\x$, where $\lambda$ is the smallest eigenvalue.  We have that: \[
\begin{split}
\lambda \x & = \frac{M+M^\T}{2}\x \\
& = \frac{\I (I-\gamma P_\sigma)(I-\gamma P_\tau)^{-1}\I+\I (I-\gamma P_\tau^\T)^{-1}(I-\gamma P_\sigma^\T) \I}{2}\x \\
& =\I  \frac{(I-\gamma P_\sigma)(I-\gamma P_\tau)^{-1}+ (I-\gamma P_\tau^\T)^{-1}(I-\gamma P_\sigma^\T) }{2} \I  \x \enspace .
\end{split}
\]
By letting $\y=\I \x$ we obtain the equation:
\[
\lambda \y = \frac{(I-\gamma P_\sigma)(I-\gamma P_\tau)^{-1}+ (I-\gamma P_\tau^\T)^{-1}(I-\gamma P_\sigma^\T) }{2}\y \enspace .
\]
We can without loss of generality assume that $\y$ has $L^2$-norm equal
to one, and by the triangle inequality we therefore have:
\[
\begin{split}
\abs \lambda & = \norm {\lambda \y}_2 = \norm {\frac{(I-\gamma P_\sigma)(I-\gamma P_\tau)^{-1}+ (I-\gamma P_\tau^\T)^{-1}(I-\gamma P_\sigma^\T) }{2}\y}_2 \\
& \leq  \frac{1}{2}\norm {(I-\gamma P_\sigma)(I-\gamma P_\tau)^{-1}\y}_2+\frac{1}{2}\norm{(I-\gamma P_\tau^\T)^{-1}(I-\gamma P_\sigma^\T)\y}_2 \enspace .
\end{split}
\]

We bound $\norm {(I-\gamma P_\sigma)(I-\gamma
  P_\tau)^{-1}\y}_2$ and $\norm{(I-\gamma
  P_\tau^\T)^{-1}(I-\gamma P_\sigma^\T)\y}_2$ separately. 
We first observe that: \[
\begin{split}
\norm {(I-\gamma P_\sigma)(I-\gamma P_\tau)^{-1}\y}_2 & ~\leq~  \max_{\vv\in \R^n_{\norm{\cdot}_\infty=1}} \norm {(I-\gamma P_\sigma)(I-\gamma P_\tau)^{-1}\vv}_2 \\
& ~=~ \max_{\vv\in \R^n_{\norm{\cdot}_\infty=1}} \norm { (I-\gamma P_\sigma)\sum_{t=0}^\infty \gamma^t P_\tau^t \vv}_2 \\
& ~\leq~ \max_{\vv\in \R^n_{\norm{\cdot}_\infty=1}} \norm { (I-\gamma P_\sigma)\sum_{t=0}^\infty \gamma^t \vv}_2 \\
& ~=~ \max_{\vv\in \R^n_{\norm{\cdot}_\infty=1}} \norm { (I-\gamma P_\sigma)\frac{\vv}{1-\gamma}}_2 \\
& ~\le~ \max_{\vv\in \R^n_{\norm{\cdot}_\infty=1}} \norm { \frac{(1+\gamma)\vv}{1-\gamma}}_2 \\
& ~=~ \frac{1+\gamma}{1-\gamma}\max_{\vv\in \R^n_{\norm{\cdot}_\infty=1}}  \norm {\vv}_2 \\
& ~=~ \frac{(1+\gamma)\sqrt{n}}{1-\gamma} \enspace .
\end{split}
\]
Here, the first inequality follows from the fact that if 
$\vv\in \R^n$ has $L^2$-norm equal to 1, then it has $L^\infty$-norm equal
to at most 1. The first equality follows from Lemma \ref{lem:converges}.
To prove the second inequality we use that $\norm{P_\tau^t \vv}_\infty \le
\norm{\vv}_\infty$, for all $t \ge 0$, since the entries of $P_\tau$ are in $[0,1]$.
The third inequality follows from the fact that $\norm{(I-\gamma
  P_\sigma)\vv}_\infty \le (1+\gamma)\norm{\vv}_\infty$.
The last equality follows from the fact that if a vector $\vv\in \R^n$ has $L^\infty$-norm equal to $1$ then it has $L^2$-norm at most $\sqrt{n}$.

We also have that: \[
\begin{split}
\norm {(I-\gamma P_\tau^\T)^{-1}(I-\gamma P_\sigma^\T)\y}_2  & ~\leq~ \max_{\vv\in \R^n_{\norm{\cdot}_1=1}} \norm { (I-\gamma P_\tau^\T)^{-1}(I-\gamma P_\sigma^\T) \sqrt n \vv}_2 \\
& ~\leq~ (1+\gamma) \sqrt  n \max_{\vv\in \R^n_{\norm{\cdot}_1=1}} \norm { (I-\gamma P_\tau^\T)^{-1} \vv}_2 \\
& ~=~ (1+\gamma) \sqrt  n \max_{\vv\in \R^n_{\norm{\cdot}_1=1}} \norm { \sum_{t=0}^\infty \gamma^t (P_\tau^t)^\T \vv}_2 \\
& ~\leq~ (1+\gamma) \sqrt n \max_{\vv\in \R^n_{\norm{\cdot}_1=1}} \norm{  \sum_{t=0}^\infty \gamma^t \vv}_2 \\
& ~=~ (1+\gamma) \sqrt n \max_{\vv\in \R^n_{\norm{\cdot}_1=1}} \norm { \frac{\vv }{1-\gamma}}_2\\
& ~=~ \frac{(1+\gamma) \sqrt n}{1-\gamma} \max_{\vv\in \R^n_{\norm{\cdot}_1=1}} \norm { \vv}_2\\
& ~=~ \frac{(1+\gamma) \sqrt n}{1-\gamma} \enspace .
\end{split}
\]
Here, the first inequality follows from the fact that if a vector
$\vv\in \R^n$ has $L^2$-norm equal to 1 then it has $L^1$-norm equal to at
most $\sqrt n$. The second inequality follows from the fact that the
columns of $P_\sigma^\T$ sum to 1 such that $\norm{(I-\gamma
  P_\sigma^\T) \vv}_1 \le (1+\gamma) \norm{\vv}_1$.
The first equality follows from Lemma \ref{lem:converges}.
For the third inequality we again use that the
columns of $P_\sigma^\T$ sum to 1, which implies that
$\ee^\T P_\sigma^\T \vv = \ee^\T \vv$ such that
$\norm{P_\sigma^\T \vv}_1 \le \norm{\vv}_1$.
The last equality follows from the fact that if a vector $\vv\in
\R^n$ has $L^1$-norm equal to 1 then it has $L^2$-norm at most 1.

Hence,
\[
\abs \lambda ~\leq~ \frac{1}{2}\frac{(1+\gamma) \sqrt
  n}{1-\gamma}+\frac{1}{2}\frac{(1+\gamma) \sqrt n}{1-\gamma} ~=~ \frac{(1+\gamma) \sqrt n}{1-\gamma} \enspace ,
\]
which completes the proof.
\end{proof}

We next upper bound the smallest eigenvalue of
$\frac{M_{G_n,\sigma_n,\tau_n}+M^\T_{G_n,\sigma_n,\tau_n}}{2}$, where
$G_n$, $\sigma_n$, and $\tau_n$ are defined in Section
\ref{sec:counter} (Figure \ref{fig:high kappa}).

\begin{theorem}
Let $n>2$ and $0<\gamma<1$ be given, and let $M :=
M_{G_n,\sigma_n,\tau_n}$. The matrix $\frac{M+M^\T}{2}$ has smallest
eigenvalue at most $1-\frac{\gamma \sqrt {n-2}}{\sqrt{2}(1-\gamma)}$. 
\end{theorem}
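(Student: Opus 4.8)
The plan is to bound the smallest eigenvalue from above by producing a single well-chosen test vector and invoking the Rayleigh--Ritz principle. Since $\frac{M+M^\T}{2}$ is symmetric, its smallest eigenvalue equals $\min_{\norm{\x}_2=1}\x^\T\tfrac{M+M^\T}{2}\x$, and because $\x^\T M\x$ is a scalar (hence equal to its own transpose) we have $\x^\T M\x=\x^\T\tfrac{M+M^\T}{2}\x$ for every $\x$. Thus it suffices to exhibit one unit vector $\x$ with $\x^\T M\x\le 1-\frac{\gamma\sqrt{n-2}}{\sqrt2(1-\gamma)}$. I would take $\x$ with $\x_1=\tfrac12$, $\x_2=-\tfrac12$, and $\x_i=\frac{1}{\sqrt{2(n-2)}}$ for $i>2$; a one-line check gives $\norm{\x}_2^2=\tfrac14+\tfrac14+(n-2)\cdot\frac{1}{2(n-2)}=1$. (Morally this $\x$ is, up to scaling, the bottom eigenvector: $M$ is the identity plus a rank-two term, so $\tfrac{M+M^\T}{2}-I$ has a single relevant $2$-dimensional block, and $\x$ is its negative-eigenvalue direction; but we do not need to phrase the proof that way.)

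Next I would evaluate $M\x$ directly rather than writing out the whole matrix. Since all states of $G_n$ belong to Player~2 we have $\I=I$, so $M\x=(I-\gamma P_{\sigma_n})\vv$ where $\vv:=(I-\gamma P_{\tau_n})^{-1}\x$ is, by Lemma~\ref{lem:converges} and the reasoning behind~(\ref{eq:one_step}), the unique solution of $\vv=\x+\gamma P_{\tau_n}\vv$. Using the transition structure of $\tau_n$ (states $1$ and $2$ self-loop, every state $i>2$ moves to state $2$), this system decouples exactly as in the computation inside Lemma~\ref{lem:simple calc}, but with $\cc_{\tau_n}$ replaced by $\x$, and solves to $\vv_1=\frac{1}{2(1-\gamma)}$, $\vv_2=\frac{-1}{2(1-\gamma)}$, and $\vv_i=\frac{1}{\sqrt{2(n-2)}}-\frac{\gamma}{2(1-\gamma)}$ for $i>2$. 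Applying $(I-\gamma P_{\sigma_n})$, and using that $\sigma_n$ has states $1$ and $2$ self-looping while every $i>2$ moves to state $1$, gives $(M\x)_1=(1-\gamma)\vv_1=\tfrac12$, $(M\x)_2=(1-\gamma)\vv_2=-\tfrac12$, and $(M\x)_i=\vv_i-\gamma\vv_1=\frac{1}{\sqrt{2(n-2)}}-\frac{\gamma}{1-\gamma}$ for $i>2$.

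Finally I would form the Rayleigh quotient: $\x^\T M\x=\x_1(M\x)_1+\x_2(M\x)_2+\sum_{i>2}\x_i(M\x)_i=\tfrac14+\tfrac14+(n-2)\cdot\frac{1}{\sqrt{2(n-2)}}\bigl(\frac{1}{\sqrt{2(n-2)}}-\frac{\gamma}{1-\gamma}\bigr)$. Simplifying the last term using $\frac{n-2}{\sqrt{2(n-2)}}=\frac{\sqrt{n-2}}{\sqrt2}$ yields $\x^\T M\x=1-\frac{\gamma\sqrt{n-2}}{\sqrt2(1-\gamma)}$, so by the variational principle the smallest eigenvalue of $\frac{M+M^\T}{2}$ is at most this value, which is exactly the claim (in fact attained with equality).

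I do not expect a genuine obstacle: the whole argument is one test vector plus Rayleigh--Ritz, and all the linear algebra is elementary once one observes that $M$ is the identity plus a rank-two perturbation supported on states $\{1,2\}$ against states $\{3,\dots,n\}$. The only non-mechanical step is guessing the right test vector; after that the computation is the routine verification sketched above and mirrors the calculations already carried out in Lemma~\ref{lem:simple calc}.
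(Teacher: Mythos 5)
Your proof is correct, and it takes a recognizably different (slightly lighter) route than the paper, even though the test vector is the same: your unit vector is exactly the paper's $\cc_{\tau_n}$ with $a=\sqrt{2/(n-2)}$, rescaled by its norm $2$. The paper proves the bound by verifying that this vector is an honest eigenvector of $\frac{M+M^\T}{2}$ with eigenvalue $1-\frac{\gamma\sqrt{n-2}}{\sqrt{2}(1-\gamma)}$; this forces it to compute both $M\x=(I-\gamma P_{\sigma_n})(I-\gamma P_{\tau_n})^{-1}\x$ and the transposed product $(I-\gamma P_{\tau_n}^\T)^{-1}(I-\gamma P_{\sigma_n}^\T)\x$, and to solve for the value of $a$ that makes the eigenvector equation close. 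You instead invoke Rayleigh--Ritz, using $\x^\T\frac{M+M^\T}{2}\x=\x^\T M\x$, so you only need $M\x$, and your computation of $\vv=(I-\gamma P_{\tau_n})^{-1}\x$ and $(M\x)_i$ is correct and matches the structure of Lemma~\ref{lem:simple calc}; the final arithmetic indeed gives $\x^\T M\x=1-\frac{\gamma\sqrt{n-2}}{\sqrt{2}(1-\gamma)}$. What your route buys is a shorter verification (no $M^\T\x$ computation, no solving for $a$); what the paper's route buys is the extra information that the bound is an actual eigenvalue of $\frac{M+M^\T}{2}$, not merely a Rayleigh-quotient value. Two cosmetic quibbles: $M-I=\gamma(P_{\tau_n}-P_{\sigma_n})(I-\gamma P_{\tau_n})^{-1}$ is rank one (its symmetrization is rank two), so your ``identity plus a rank-two term'' aside is slightly off; and your parenthetical ``attained with equality'' (i.e.\ that this is the smallest eigenvalue) is not established by your argument --- nor by the paper's --- but neither remark is used, so the proof of the stated upper bound stands.
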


\begin{proof}
Recall that
\[
(\cc_\tau)_i ~=~ \begin{cases}
1 & \text{if $i=1$}\\
-1 & \text{if $i=2$}\\
a & \text{if $i>2$~.}
\end{cases} 
\]
We show that, for an appropriate choice of $a$, $\x=\cc_\tau$ is an eigenvector for the matrix $\frac{M+M^\T}{2}$ with eigenvalue $\lambda = 1-\frac{\gamma \sqrt {n-2}}{\sqrt{2}(1-\gamma)}$.
Hence, we prove that the following equation is satisfied:
\[
\begin{split}
\lambda \x & ~=~ \frac{M+M^\T}{2}\x \\
& ~=~ \frac{\I (I-\gamma P_\sigma)(I-\gamma P_\tau)^{-1}\I+\I (I-\gamma P_\tau^\T)^{-1}(I-\gamma P_\sigma^\T) \I}{2}\x \enspace .
\end{split}
\]
Recall that $\I=I$. We evaluate the two terms on the right-hand-side separately and from right to left.

We first evaluate $\rr:=(I-\gamma P_\sigma)(I-\gamma P_\tau)^{-1}\x$. 
From Lemma~\ref{lem:simple calc} we get that
\begin{align*}
\rr_{i} ~&=~\begin{cases}
1 & \text{if $i=1$} \\
-1 & \text{if $i=2$} \\
a-\frac{2\gamma}{1-\gamma}& \text{if $i > 2$} \enspace .
\end{cases}
\end{align*}

We next evaluate $
\rr'':=(I-\gamma P_\tau^\T)^{-1}(I-\gamma P_\sigma^\T)\x$.
Let $\vv '=(I-\gamma P_\sigma^\T)\x$.  Observe that no actions move to
state $i$ for $i > 2$, which means that $(P_\sigma^\T)_i=\0$ and $(P_\tau^\T)_i=\0$ for $i > 2$. By similarly considering the incoming actions for states 1 and 2, it is easy to check that
\[
\vv'_i~=~\begin{cases}
1 -\gamma - \gamma a(n-2) & \text{if $i=1$}\\
-1+\gamma & \text{if $i=2$} \\
a & \text{if $i>2$} 
\enspace .
\end{cases}
\]
Since $\rr''= (I-\gamma P_\tau^\T)^{-1}\vv'$ we see that
 $\rr''_i=\vv'_i+\gamma(P_\tau^\T)_i \rr''$. For $i>2$, it follows that $\rr''_i=\vv'_i+0=a$. For $i=1$, only the self-loop moves to state $1$ in $\tau$, and we get that
\[
\rr''_1~=~\vv'_1+\gamma\rr''_1 ~=~ 
1 -\gamma - \gamma a(n-2) +\gamma\rr''_1 ~=~
1-\frac{\gamma a(n-2)}{1-\gamma}\enspace .
\]
For $i=2$, all actions in $\tau$, except the one from state 1, move to state 2. We therefore see that 
\[
\rr''_2~=~\vv'_2+\gamma\sum_{j=2}^n\rr''_j~=-1+\gamma+\gamma a(n-2)+\gamma\rr''_2~=~-1+\frac{\gamma a(n-2)}{1-\gamma}\enspace .
\]
Hence, we have that
\[
\rr''_i~=~\begin{cases}
1-\frac{\gamma a(n-2)}{1-\gamma}& \text{if $i=1$}\\
-1+\frac{\gamma a(n-2)}{1-\gamma}& \text{if $i=2$} \\
a & \text{if $i>2$} 
\enspace .
\end{cases}
\]
 
We conclude that
\[
\left(\frac{M+M^\T}{2}\x\right)_i~=~\frac{\rr_i+\rr''_i}{2}~=~\begin{cases}
1-\frac{\gamma a(n-2)}{2(1-\gamma)}  & \text{if $i=1$} \\
-1+\frac{\gamma a(n-2)}{2(1-\gamma)}  & \text{if $i=2$} \\
a-\frac{\gamma}{1-\gamma} & \text{if $i>2$} \enspace .
\end{cases}
\]

It remains to show that $\lambda \x=\frac{\rr+\rr''}{2}$, which corresponds to the system of equations: \begin{align*}
\lambda &=1-\frac{\gamma a(n-2)}{2(1-\gamma)} \\ 
\lambda a &=a-\frac{\gamma}{1-\gamma}
\enspace .
\end{align*}
By eliminating $\lambda$, using the first equation, we get: 
\begin{align*}
a-\frac{\gamma a^2(n-2)}{2(1-\gamma)}~&=~ a-\frac{\gamma}{1-\gamma}\quad
\iff \\
\frac{\gamma}{1-\gamma} ~&=~\frac{\gamma a^2(n-2)}{2(1-\gamma)} \quad
\iff \\
2~&=~a^2(n-2) \quad \iff \\
a ~&=~\pm \sqrt \frac{2}{n-2} \enspace .
\end{align*}
Thus for $a=\sqrt \frac{2}{n-2}$ we get that 
\[
\lambda =1-\frac{\gamma \sqrt {n-2}}{\sqrt{2}(1-\gamma)}
\]
is an eigenvalue for $\frac{M+M^\T}{2}$ with eigenvector $\x=\cc_\tau$, and in particular the smallest eigenvalue for $\frac{M+M^\T}{2}$ is at most $1-\frac{\gamma \sqrt {n-2}}{\sqrt{2}(1-\gamma)}$.
\end{proof}

\subsection{Bounds for the positive $P$-matrix number}\label{sec:positive_p_matrix}

We next lower bound the positive $P$-matrix number for any 2TBSG.

\begin{theorem}
Let $n$ and $0<\gamma<1$ be given. For any  2TBSG $G$ with $n$ states,
the matrix $M := M_{G,\sigma,\tau}$, where $\sigma$ and $\tau$ partition
the actions of $G$, has positive $P$-matrix number, $\theta(M)$, at least $\frac{(1-\gamma)^2}{(1+\gamma)^2 n}=\Omega(\frac{(1-\gamma)^2}{n})$ 
\end{theorem}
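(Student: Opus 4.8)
The plan is to apply Lemma~\ref{lem:xj} essentially off the shelf, after reducing to the matrix $(I-\gamma P_\sigma)(I-\gamma P_\tau)^{-1}$ via the ownership matrix. Fix an arbitrary $\x\in\R^n$ with $\norm{\x}_2=1$; by definition of $\theta(M)$ it suffices to exhibit one coordinate $j$ with $\x_j(M\x)_j\ge\frac{(1-\gamma)^2}{(1+\gamma)^2 n}$. First I would set $\x'=\I\x$ and note that $\I$ is a diagonal $\pm1$ matrix, so $\x'\ne\0$ and $\norm{\x'}_2=\norm{\x}_2=1$. Let $\vv=(I-\gamma P_\tau)^{-1}\x'$ and pick $j\in\argmax_i\abs{\vv_i}$. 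Since $M=\I(I-\gamma P_\sigma)(I-\gamma P_\tau)^{-1}\I$ and $\I_{jj}\in\{-1,1\}$, a direct expansion gives
\[
\x_j(M\x)_j ~=~ (\I_{jj}\x'_j)\,\bigl(\I_{jj}((I-\gamma P_\sigma)(I-\gamma P_\tau)^{-1}\x')_j\bigr) ~=~ \x'_j\bigl((I-\gamma P_\sigma)(I-\gamma P_\tau)^{-1}\x'\bigr)_j,
\]
which by inequality (\ref{eq:Ax-positive}) of Lemma~\ref{lem:xj} is at least $(1-\gamma)\abs{\x'_j\vv_j}$.

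The remaining step is to lower bound $\abs{\x'_j\vv_j}$ using $\norm{\x'}_2=1$. Inequality (\ref{eq:xi-size}) of Lemma~\ref{lem:xj} says $\abs{\x'_i}\le(1+\gamma)\abs{\vv_j}$ for every $i$, so $1=\norm{\x'}_2^2=\sum_i\x_i'^2\le n(1+\gamma)^2\vv_j^2$, giving $\abs{\vv_j}\ge\frac{1}{\sqrt n(1+\gamma)}$. Combining this with inequality (\ref{eq:xj-size}), $\abs{\x'_j}\ge(1-\gamma)\abs{\vv_j}$, yields $\abs{\x'_j\vv_j}\ge(1-\gamma)\vv_j^2\ge\frac{1-\gamma}{n(1+\gamma)^2}$. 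Plugging this back in,
\[
\x_j(M\x)_j ~\ge~ (1-\gamma)\abs{\x'_j\vv_j} ~\ge~ \frac{(1-\gamma)^2}{(1+\gamma)^2 n},
\]
and taking the minimum over all unit vectors $\x$ gives $\theta(M)\ge\frac{(1-\gamma)^2}{(1+\gamma)^2 n}$, as claimed.

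I do not anticipate a genuine obstacle here: the work has already been front-loaded into Lemma~\ref{lem:xj}, and the only points requiring a little care are (a) verifying that conjugation by $\I$ turns $\x_j(M\x)_j$ into the quantity controlled by (\ref{eq:Ax-positive}) — this is where the fact that $\I$ is diagonal with entries $\pm1$ is used twice, once to pass norms through unchanged and once to cancel the two copies of $\I_{jj}$ — and (b) correctly chaining the three inequalities of the lemma so that the bound is expressed in terms of $\vv_j^2$ and then $\norm{\x'}_2$. If anything is "hard," it is only making sure the index $j$ used in all three inequalities is the same one, namely $j\in\argmax_i\abs{((I-\gamma P_\tau)^{-1}\I\x)_i}$, which is exactly the hypothesis of Lemma~\ref{lem:xj} applied to the vector $\I\x$.
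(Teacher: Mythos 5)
Your proof is correct and follows essentially the same route as the paper's: both apply Lemma~\ref{lem:xj} with $\vv=(I-\gamma P_\tau)^{-1}\x'$ and $j\in\argmax_i\abs{\vv_i}$, chain (\ref{eq:Ax-positive}), (\ref{eq:xj-size}) and (\ref{eq:xi-size}), and finish by relating $\abs{\vv_j}$ to the unit $L^2$-norm of the input vector. The only differences are cosmetic: you make the conjugation by the diagonal $\pm 1$ matrix $\I$ explicit (the paper leaves this substitution implicit, having justified it earlier), and you bound $\abs{\vv_j}$ by summing $\sum_i \x_i'^2\le n(1+\gamma)^2\vv_j^2$ rather than picking a coordinate with $\abs{\x_i}\ge 1/\sqrt{n}$, which yields the same constant.
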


\begin{proof}
Recall that the positive $P$-matrix number of $M = M_{G,\sigma,\tau}$ is
defined as:
\[
\theta(M) ~=~ \min_{\norm{\x}_2 =1}~\max_{i\in[n]} ~~ \x_i (M\x)_i \enspace. 
\]

Let $\x \in \R^n_{\norm{\cdot}_2=1}$ be given. Let  $\vv=(I-\gamma
P_\tau)^{-1}\x$ and $j\in \argmax_i \abs {\vv_i}$.
From Lemma \ref{lem:xj} we know that $\x_j(M
\x)_j \geq (1-\gamma) \abs{\x_j \vv_j} \ge (1-\gamma)^2 (\vv_j)^2$. We also
know from Lemma \ref{lem:xj} that $\vv^2_j \ge \frac{\x_i^2}{(1+\gamma)^2}$ for all $i \in [n]$. Hence, we see that
$\x_j(M \x)_j \geq \frac{(1-\gamma)^2 \x_i^2}{(1+\gamma)^2}$ for all
$i \in [n]$. Since $\norm{\x}_2 = 1$ there must exist an index $i$
such that $\abs{\x_i} \ge \frac{1}{\sqrt{n}}$. It follows that 
$\x_j(M \x)_j \geq \frac{(1-\gamma)^2}{(1+\gamma)^2 n}$. Since this
inequality holds for all $\x \in \R^n_{\norm{\cdot}_2=1}$ we see that 
$\theta(M) \ge \frac{(1-\gamma)^2}{(1+\gamma)^2 n}$.

\end{proof}

We next upper bound the positive $P$-matrix number of
$M_{G_n,\sigma_n,\tau_n}$, where once again $G_n$, $\sigma_n$, and $\tau_n$ are from the construction shown in Figure \ref{fig:high kappa}.

\begin{theorem}
Let $n>2$ and $0< \gamma<1$ be given. The matrix $M:=
M_{G_n,\sigma_n,\tau_n}$ has positive $P$-matrix number $\theta(M) <
\frac{(1-\gamma)^2}{(2\gamma)^2 (n-2)}$.
\end{theorem}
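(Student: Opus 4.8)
The plan is to use the variational characterization of the positive $P$-matrix number together with Lemma~\ref{lem:simple calc}. Since $\theta(M) = \min_{\norm{\x}_2=1}\max_{i\in[n]}\x_i(M\x)_i$, for any non-zero test vector $\x$ we get, by normalizing $\x$ to unit length, the bound $\theta(M) \le \frac{\max_{i\in[n]}\x_i(M\x)_i}{\norm{\x}_2^2}$. So it suffices to exhibit a single $\x$ for which the coordinate-wise products $\x_i(M\x)_i$ are all small while $\norm{\x}_2$ is large.

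The natural candidate is once more $\x = \cc_{\tau_n}$, the vector already analyzed in the proof of Theorem~\ref{thm:lower_kappa}, but with a different choice of the free parameter $a$. By Lemma~\ref{lem:simple calc}, writing $\rr'_i := \x_i(M\x)_i$, we have $\rr'_1 = \rr'_2 = 1$ and $\rr'_i = a^2 - \frac{2\gamma a}{1-\gamma}$ for $i > 2$. I would choose $a = \frac{2\gamma}{1-\gamma}$, which makes $\rr'_i = a^2 - \frac{2\gamma}{1-\gamma}\cdot a = 0$ for every $i > 2$. Hence $\max_{i\in[n]}\x_i(M\x)_i = 1$ (the coordinates $i>2$ contribute $0<1$, and coordinates $1,2$ contribute exactly $1$), while $\norm{\x}_2^2 = 1 + 1 + (n-2)a^2 = 2 + (n-2)\bigl(\frac{2\gamma}{1-\gamma}\bigr)^2$.

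Plugging into the bound for $\theta$ then gives $\theta(M) \le \frac{1}{\,2 + (n-2)(2\gamma/(1-\gamma))^2\,}$, and since $n > 2$ and $0 < \gamma < 1$ the additive term $2$ in the denominator is strictly positive, so this is strictly less than $\frac{1}{(n-2)(2\gamma/(1-\gamma))^2} = \frac{(1-\gamma)^2}{(2\gamma)^2(n-2)}$, as claimed. There is essentially no technical obstacle here; the only real idea is to pick $a$ so that the $n-2$ bulk coordinates contribute zero to the numerator while still inflating $\norm{\x}_2^2$ — in contrast to the choice $a = \frac{\gamma}{1-\gamma}$ used for the $P_*(\kappa)$ lower bound, which made those coordinates as negative as possible. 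Everything else is a direct application of Lemma~\ref{lem:simple calc}.
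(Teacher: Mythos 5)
Your proposal is correct and follows essentially the same route as the paper's own proof: both use the test vector $\x = \cc_{\tau_n}$ with $a = \frac{2\gamma}{1-\gamma}$, apply Lemma~\ref{lem:simple calc} to get $\x_i(M\x)_i = 1$ for $i \in \{1,2\}$ and $0$ for $i>2$, normalize by $\norm{\x}_2^2 = 2 + (n-2)a^2$, and obtain the strict inequality by dropping the additive $2$ in the denominator. No gaps; the argument is complete.
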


\begin{proof}
Recall that the positive $P$-matrix number of $M = M_{G_n,\sigma_n,\tau_n}$ is
defined as:
\[
\theta(M) ~=~ \min_{\norm{\x}_2 =1}~\max_{i\in[n]} ~~ \x_i (M\x)_i \enspace. 
\]
Recall also that
\[
(\cc_\tau)_i ~=~ \begin{cases}
1 & \text{if $i=1$}\\
-1 & \text{if $i=2$}\\
a & \text{if $i>2$~.}
\end{cases} 
\]

Define $\x := \cc_\tau$ and $\x' := \frac{\x}{\norm{\x}_2}$.
We show that, for an appropriate choice of $a$, $\x'$ satisfies
\[
\max_{i\in[n]} ~~ \x'_i (M\x')_i~<~\frac{(1-\gamma)^2}{(2\gamma)^2 (n-2)}~.
\]
This implies that $\theta(M)<\frac{(1-\gamma)^2}{(2\gamma)^2 (n-2)}$, completing the proof.

Let $\rr' \in \R^n$ satisfy $\rr'_i = \x_i (M\x)_i$ for all $i \in [n]$.
From Lemma~\ref{lem:simple calc}, we then get that
\[(\rr')_i ~=~ \x_i (M\x)_i ~=~ \begin{cases}
1 & \text{if $i=1$} \\
1 & \text{if $i=2$} \\
a^2 - \frac{2\gamma a}{1-\gamma} & \text{if $i > 2$} \enspace .
\end{cases}
\]
Since $\norm{\x}_2=\sqrt{2+(n-2)a^2}$ it follows that
\[ 
\x'_i (M \x')_i~=~ \frac{\x_i (M \x)_i}{\norm{\x}_2^2} ~=~ \begin{cases}
\frac{1}{2+(n-2)a^2} & \text{if $i=1$} \\
\frac{1}{2+(n-2)a^2} & \text{if $i=2$} \\
\frac{a^2 - \frac{2\gamma a}{1-\gamma}}{2+(n-2)a^2} & \text{if $i > 2$} \enspace ,
\end{cases}
\]
and for $a = \frac{2\gamma}{1-\gamma}$ we see that:
\[
\x'_i (M \x')_i ~=~ 
\begin{cases}
\frac{1}{2+(n-2)\frac{(2\gamma)^2}{(1-\gamma)^2}} & \text{if $i=1$} \\
\frac{1}{2+(n-2)\frac{(2\gamma)^2}{(1-\gamma)^2}} & \text{if $i=2$} \\
0 & \text{if $i > 2$} \enspace .
\end{cases}
\]

Since
\begin{align*}
\frac{1}{2+(n-2)\frac{(2\gamma)^2}{(1-\gamma)^2}} ~<~ \frac{1}{(n-2)\frac{(2\gamma)^2}{(1-\gamma)^2}} ~=~\frac{(1-\gamma)^2}{(2\gamma)^2(n-2)} \enspace ,
\end{align*}
there is a vector $\x'$, with $\norm{\x'}_2=1$, such that 
\[\max_{i\in[n]}~~ \x'_i (M\x')_i~<~\frac{(1-\gamma)^2}{(2\gamma)^2(n-2)}\enspace ,\] implying that \[
\theta(M) ~=~ \min_{\norm{\x}_2 =1}~\max_{i\in[n]} ~~ \x_i (M\x)_i ~<~\frac{(1-\gamma)^2}{(2\gamma)^2(n-2)} \enspace. 
\]
\end{proof}
\bibliographystyle{abbrv}
\bibliography{../bibliography}

\end{document}